\pgfplotsset{compat=newest}                         
\pgfplotsset{plot coordinates/math parser=false}
\newlength\figureheight
\newlength\figurewidth
\newtheorem{theorem}{Theorem}
\newtheorem{lemma}{Lemma}
\newtheorem{definition}{Definition}
\newtheorem{proposition}{Proposition}
\newtheorem{example}{Example}
\newcommand{\argmin}{\arg\!\min}
\newcommand{\op}{\text}
\newcommand{\RZ}[1]{\mathsf{Z}_{#1}}
\newcommand{\RW}[1]{\mathsf{W}_{#1}}
\newcommand{\RCO}{R_{\op{CO}}}
\newcommand{\RACO}{R_{\op{CO}}}
\newcommand{\RRCO}{\mathscr{R}}
\newcommand{\D}{\mathcal{L}}
\newcommand{\XComp}{X^*}
\newcommand{\alphaComp}{\alpha^*}
\newcommand{\CoordSatCapFus}{\text{CoordSat}}
\newcommand{\StrMap}{\text{StrMap}}
\newcommand{\Set}[1]{\{#1\}}
\newcommand{\Pat}{\mathcal{P}}
\newcommand{\Qat}[2]{\mathcal{Q}_{#1,#2}}
\newcommand{\X}{\mathcal{X}}
\newcommand{\TX}{\tilde{\mathcal{X}}}
\newcommand{\Y}{\mathcal{Y}}
\newcommand{\TY}{\tilde{\mathcal{Y}}}
\newcommand{\M}{\mathcal{M}}
\newcommand{\TM}{\tilde{\mathcal{M}}}
\newcommand{\N}{\mathcal{N}}
\newcommand{\TN}{\tilde{\mathcal{N}}}
\newcommand{\U}[2]{\mathcal{U}_{#1,#2}}
\newcommand{\TU}[2]{\tilde{\mathcal{U}}_{#1,#2}}
\newcommand{\Up}[1]{\mathcal{S}_{#1}}
\newcommand{\TUp}[1]{\tilde{{\mathcal{S}}}_{#1}}
\newcommand{\SU}{\mathcal{S}}
\newcommand{\TSU}{\tilde{\mathcal{S}}}
\newcommand{\Patp}[1]{\mathcal{P}^{(#1)}}
\newcommand{\alphap}[1]{\alpha^{(#1)}}
\newcommand{\alphaU}{\underline{\alpha}}
\newcommand{\rv}{\mathbf{r}} 
\newcommand{\Fu}[1]{f_{#1}}
\newcommand{\FuHat}[1]{\hat{f}_{#1}}
\newcommand{\FuU}[1]{g_{#1}}
\newcommand{\Real}{\mathbb{R}}
\newcommand{\RealP}{\mathbb{R}_{+}}    
\newcommand{\SFM}{\text{SFM}}
\title{\LARGE \bf Improving Computational Efficiency of Communication for Omniscience and Successive Omniscience }
\author{Ni~Ding$^{1}$, Parastoo~Sadeghi$^{2}$ and Thierry~Rakotoarivelo$^{1}$
\thanks{$^{1}$Ni Ding and Thierry Rakotoarivelo are with Data61 (email: $\{$ni.ding, thierry.rakotoarivelo$\}$@data61.csiro.au).}%
\thanks{$^{2}$Parastoo Sadeghi is with the Research School of Engineering, College of Engineering and Computer Science, the Australian National University (email: $\{$parastoo.sadeghi$\}$@anu.edu.au).}%
}
\begin{document}

\maketitle
\thispagestyle{empty}
\pagestyle{empty}

\begin{abstract}
    For a group of users in $V$ where everyone observes a component of a discrete multiple random source, the process that users exchange data so as to reach omniscience, the state where everyone recovers the entire source, is called communication for omniscience (CO).
    We first consider how to improve the existing complexity $O(|V|^2 \cdot \SFM(|V|)$ of minimizing the sum of communication rates in CO, where $\SFM(|V|)$ denotes the complexity of minimizing a submodular function.
    We reveal some structured property in an existing coordinate saturation algorithm: the resulting rate vector and the corresponding partition of $V$ are segmented in $\alpha$, the estimation of the minimum sum-rate.
    A parametric (PAR) algorithm is then proposed where, instead of a particular $\alpha$, we search the critical points that fully determine the segmented variables for all $\alpha$ so that they converge to the solution to the minimum sum-rate problem and the overall complexity reduces to $O(|V| \cdot \SFM(|V|))$.

    For the successive omniscience (SO), we consider how to attain local omniscience in some complimentary user subset so that the overall sum-rate for the global omniscience still remains minimum. While the existing algorithm only determines a complimentary user subset in $O(|V| \cdot \SFM(|V|))$ time, we show that, if a lower bound on the minimum sum-rate is applied to the segmented variables in the PAR algorithm, not only a complimentary subset, but also an optimal rate vector for attaining the local omniscience in it are returned in $O(|V| \cdot \SFM(|V|))$ time.
\end{abstract}

\section{introduction}
Let there be a finite number of users that are indexed by the set $V$. Each of them observes a distinct component of a discrete multiple random source in private. The users are allowed to exchange their observations over public authenticated noiseless broadcast channels so as to attain \textit{omniscience}, the state that each user reconstructs all components in the multiple source. This process is called \emph{communication for omniscience (CO)} \cite{Csiszar2004}, where the fundamental problem is how to attain omniscience with the minimum sum of broadcast rates.

The study of the minimum sum-rate problem started in the coded cooperative data exchange (CCDE) \cite{Roua2010,CourtIT2014,MiloIT2016}, where the users obtain finite-length observations of a multiple linear source and the broadcasts are an integral number of linear combinations. The authors in \cite{CourtIT2014,MiloIT2016} solved
the minimum sum-rate problem in CCDE in polynomial time by $O(|V|^2)$ calls of the submodular function minimization (SFM) algorithm. In the general models where the observation sequences are infinitely long and communication rates are real-valued, it is shown in \cite{Ding2016NetCod,Ding2018IT,ChanMMI} that the minimum sum-rate is determined by the first critical point in the principal sequence of partitions (PSP) \cite{Narayanan1991PLP,MinAveCost}, a partition sequence that determines a Dilworth truncation parameterized by an estimation $\alpha$ of the minimum sum-rate.
A modified decomposition algorithm (MDA) is proposed in \cite{Ding2016NetCod,Ding2018IT} that recursively uses the Dilworth truncation to update the current estimation $\alpha$ until it converges to the minimum. The asymptotic complexity of the MDA is again $O(|V|^2 \cdot \SFM(|V|))$.\footnote{The asymptotic time complexity refers to the asymptotic measure of the (worst-case) time complexity of an algorithm when the number of users $|V|$ grows large \cite{BigO1965}. $\SFM(|V|)$ denotes the asymptotic complexity of minimizing a submodular function that is defined on $2^V$.}
However, $\SFM(|V|)$ is in the order of $|V|^5$ to $|V|^8$ \cite[Chapter~VI]{Fujishige2005} and yet it is still worth discussing how to reduce this complexity.

In the meantime, the idea of successive omniscience (SO) is proposed in \cite{ChanSuccessiveIT,Ding2015NetCod} revealing that the CO problem can be solved in a two-stage manner: local omniscience followed by the global omniscience. It is shown in \cite{ChanSuccessiveIT} that there is a particular group of \emph{complimentary} user subsets so that the local omniscience in any of them can be attained first while the overall communication rates for the global omniscience whereafter still remains minimized. Then, the SO boils down to the problem of how to select a complimentary subset and to attain omniscience in it. The authors in \cite{Ding2017ISIT} derived a sufficient condition for a user subset to be complimentary so that such a subset can be searched in $O(|V| \cdot \SFM(|V|))$ time. However, we are still missing an optimal rate vector for the local omniscience problem. To avoid repetitively applying the existing algorithms for CO,\footnote{The local omniscience in a user subset is again a CO problem where the existing algorithms, e.g., the MDA in \cite{Ding2016NetCod,Ding2018IT}, can be applied. However, in the case of multi-stage SO, we need to call these algorithms time and again. It makes the SO more complex than attaining omniscience in a one-off manner. }
it is desirable to see if an optimal rate vector for the local omniscience can be obtained at the same time when a complimentary user subset is chosen.

The main purpose of this paper is to improve the computational efficiency for solving both CO and SO problems. We first show how to reduce the complexity of solving the minimum sum-rate problem from $O(|V|^2 \cdot \SFM(|V|))$ to $O(|V| \cdot \SFM(|V|))$. We start the work by reviewing the coordinate saturation ($\CoordSatCapFus$) algorithm,\footnote{In this paper, the $\CoordSatCapFus$ algorithm refers to \cite[Algorithm~3]{Ding2018IT}.} a nesting algorithm in the MDA in \cite{Ding2018IT} that determines the Dilworth truncation for a given $\alpha$. We show that the partition obtained in each iteration gets coarser in $\alpha$ and its value is segmented by a finite number of critical points that also form a PSP. We then consider a question that follows naturally: how to obtain this partition for all, instead of one, $\alpha$.
For the user subset that is used to update the partition,\footnote{In the $\CoordSatCapFus$ algorithm, the partition is updated by merging all elements that intersect with a user subset.}
we prove a strict strong map property showing that its size is shrinking and also segmented in $\alpha$. All critical values for determining this segmented subset can be searched by the parametric submodular function minimization (PSFM) algorithm \cite{Fleischer2003PSFM,Nagano2007PSFM,IwataPSFM1997} that completes at the same asymptotic time as the SFM algorithm.
We then propose a parameterized (PAR) algorithm that iteratively updates a segmented partition and a rate vector towards the PSP of the entire user set, which determines the minimum sum-rate, and an optimal rate vector, respectively.
The complexity of the PAR algorithm reduces to $O(|V| \cdot \SFM(|V|))$. We also discuss how to implement the PAR algorithm in a distributed manner.

In the second part of this paper, we discuss how to solve the SO problem efficiently. We apply a lower bound on the minimum sum-rate to the segmented partition and rate vector at the end of each iteration of the PAR algorithm to show that not only a complimentary user subset for SO, but also an optimal rate vector that attains the local omniscience in it can both be determined in $O(|V| \cdot \SFM(|V|))$ time.

This paper is organized as follows. The system model for CO is described in Section~\ref{sec:CO}, where we also introduce the notation and review the existing results that are required to prove the strong map property and propose the PAR algorithm in Section~\ref{sec:ParAlgo}. In Section~\ref{sec:Complexity}, we show the complexity reduction by the PAR algorithm and discuss a distributed implementation method. In Section~\ref{sec:SO}, we utilize the PAR algorithm to efficiently solve the SO problem.

\section{Communication for Omniscience}
\label{sec:CO}

Let $V = \Set{1,\dotsc,|V|}$ with $|V|>1$ be a finite set that indexes all users in the system. We call $V$ the \textit{ground set}. Let $\RZ{V}=(\RZ{i}:i\in V)$ be a vector of discrete random variables indexed by $V$. For each $i\in V$, user $i$ privately observes an $n$-sequence $\RZ{i}^n$ of the random source $\RZ{i}$ that is i.i.d.\ generated according to the joint distribution $P_{\RZ{V}}$. Each user is able to compress and publish his/her observations over noiseless broadcast channels to help the others reconstruct the source sequence $\RZ{V}^n$. The state that each user recovers $\RZ{V}^n$ is called \textit{omniscience} and the process that the users communicate with each other to attain omniscience is called \textit{communication for omniscience (CO)} \cite{Csiszar2004}.

\subsection{Minimum Sum-rate Problem}

Let $\rv_V=(r_i:i\in V)$ be a rate vector and $r$ be the \textit{sum-rate function} associated with $\rv_V$ such that
$$ r(X)=\sum_{i\in X} r_i, \quad \forall X \subseteq V $$
with the convention $r(\emptyset)=0$. We call $\rv_V$ an \textit{achievable rate vector} if the omniscience can be attained by letting the users communicate at the rates designated by $\rv_V$.
The authors in \cite{Csiszar2004} derived the \textit{achievable rate region} in terms of the multiterminal Slepian-Wolf constraint \cite{SW1973,Cover1975}:
$$ \RRCO(V)=\Set{ \rv_V\in\Real^{|V|} \colon r(X) \geq H(X|V\setminus X),\forall X \subsetneq V },$$
where $H(X)$ is the amount of randomness in $\RZ{X}$ measured by Shannon entropy \cite{Cover2012ITBook} and $H(X|Y)=H(X \cup Y)-H(Y)$ is the conditional entropy of $\RZ{X}$ given $\RZ{Y}$.

The fundamental problem in CO is to minimize the sum-rate in the achievable rate region \cite[Proposition 1]{Csiszar2004}
\begin{equation} \label{eq:MinSumRate}
    \RCO(V) = \min\Set{ r(V) \colon \rv_V \in \RRCO(V)}.
\end{equation}
To efficiently solve the \textit{minimum sum-rate problem} by avoiding dealing with the exponentially growing number of constraints in the linear programming, \eqref{eq:MinSumRate} is converted to a combinatorial optimization problem \cite[Example 4]{Csiszar2004} \cite{Chan2008tight} \cite[Corollary 6]{Ding2018IT}
\begin{equation} \label{eq:MinSumRatePat}
    \RACO(V) = \max_{\Pat \in \Pi(V) \colon |\Pat| > 1} \sum_{C \in \Pat} \frac{H(V) - H(C)}{|\Pat|-1},
\end{equation}
where $\Pi(V)$ denotes the set of all partitions of $V$. It is shown in \cite{ChanMMI,Ding2016NetCod,Ding2018IT} that problem \eqref{eq:MinSumRatePat} can be solved based on the existing submodular function minimization (SFM) techniques in strongly polynomial time $O(|V|^2 \cdot \SFM(|V|))$.

\subsection{Existing Results}
\label{sec:ExResults}

In this section, we introduce the notation and two concepts relating to the minimum sum-rate problem: the Dilworth truncation and the principal sequence of partitions (PSP). We also present the coordinatewise saturation ($\CoordSatCapFus$) algorithm, an essential nesting algorithm in the MDA algorithm in \cite{Ding2018IT} for solving~\eqref{eq:MinSumRatePat}. The purpose is to review the existing results that are required to prove the strong map property in Section~\ref{sec:ParAlgo}.

\subsubsection{Preliminaries}

For $X \subseteq V$, let $\chi_X = (r_i \colon i \in V )$ be the \textit{characteristic vector} of the subset $X$ such that $r_i = 1$ if $i \in X$ and $r_i = 0$ if $i \notin X$. The notation $\chi_{\Set{i}}$ is simplified by $\chi_i$.
For $i \in V$, let $V_i = \Set{1,\dotsc,i}$ be the set of the first $i$ users in $V$.
Let $\sqcup$ denote the disjoint union. For $\X$ that contains disjoint subsets of $V$, we denote by $\TX = \sqcup_{C \in \X} C$ the \textit{fusion} of $\X$. For example, for $\X = \Set{\Set{3,4},\Set{2},\Set{8}}$, $\TX = \Set{2,3,4,8}$.
For partitions $\Pat,\Pat' \in \Pi(V)$, we denote by $\Pat \preceq \Pat'$ if $\Pat$ is finer than $\Pat'$ and $\Pat \prec \Pat'$ if $\Pat$ is strictly finer than $\Pat'$.

Function $f \colon 2^V \mapsto \Real$ is \textit{submodular} if $f(X) + f(Y) \geq f(X \cap Y) + f(X \cup Y)$ for all $X,Y \subseteq V$. The problem $\min \Set{f(X) \colon X \subseteq V}$ is a submodular function minimization (SFM) problem. It can be solved in strongly polynomial time and the set of minimizers $\argmin \Set{f(X) \colon X \subseteq V}$ form a set lattice such that the smallest minimizer $\bigcap \argmin \Set{f(X) \colon X \subseteq V}$ and largest minimizer $\bigcup \argmin \Set{f(X) \colon X \subseteq V}$ uniquely exist and can be determined at the same time when the SFM problem is solved \cite[Chapter~VI]{Fujishige2005}.
$P(f,\leq) = \Set{\rv_V \colon r(X) \leq f(X), X \subseteq V}$ and $B(f,\leq) = \Set{\rv_V \in P(f,\leq) \colon r(V) = f(V)}$ are the \emph{submodular polyhedron} and \emph{base polyhedron}, respectively.
We call $f^{V_i} \colon 2^{V_i} \mapsto \Real$ such that $f^{V_i}(X) = f(X)$ for all $X \subseteq V_i$ the \textit{reduction} of $f$ on $V_i$ \cite[Section~3.1(a)]{Fujishige2005}.

\subsubsection{Dilworth Truncation}

For $\alpha \in \RealP$, define a set function $\Fu{\alpha} \colon 2^{V} \mapsto \Real$ such that $\Fu{\alpha}(X) = \alpha - H(V) +H(X), \forall X \subseteq V$ except that $f(\emptyset) = 0$. Let $\Fu{\alpha}[\cdot]$ be a partition function such that $\Fu{\alpha}[\Pat] = \sum_{C \in \Pat} \Fu{\alpha}(C)$ for all $\Pat \in \Pi(V)$. The Dilworth truncation of $\Fu{\alpha}$ is \cite{Dilworth1944}
\begin{equation} \label{eq:Dilworth}
    \FuHat{\alpha}(V) =  \min_{\Pat \in \Pi(V)} \Fu{\alpha}[\Pat].
\end{equation}
Let
$ \Qat{\alpha}{V} = \bigwedge \argmin_{\Pat \in \Pi(V)} \Fu{\alpha}[\Pat] $
denote the finest minimizer of \eqref{eq:Dilworth}.\footnote{The minimizers of \eqref{eq:Dilworth} form a partition lattice such that the finest and coarsest minimizers uniquely exist \cite{Narayanan1991PLP}.}
The solution to \eqref{eq:Dilworth} exhibits a strong structure in $\alpha$ that is described by the PSP.

\subsubsection{Principal Sequence of Partitions (PSP)}
\label{subsec:PSP}

The Dilworth truncation $\FuHat{\alpha}(V)$ is piecewise linear strictly increasing in $\alpha$. It is determined by $p < |V| $ critical points $\alphap{p} < \dotsc < \alphap{1} < \alphap{0} = H(V)$
with the corresponding finest minimizer $\Patp{j} = \Qat{\alphap{j}}{V} = \bigwedge \argmin_{\Pat \in \Pi(V)} \Fu{\alpha_j}[\Pat]$ for all $j \in \Set{0,\dotsc,p}$ forming a partition chain called the \textit{PSP} of the ground set $V$
$$ \Set{\Set{i} \colon i\in V} = \Patp{p} \prec \dotsc \prec \Patp{1} \prec \Patp{0} = \Set{V} $$
such that $\Qat{\alpha}{V} = \Patp{j}$ for all $\alpha \in (\alphap{j+1}, \alphap{j}] $ \cite{MinAveCost,Narayanan1991PLP}.
The first critical point of PSP provides the solutions to the minimum sum-rate problem \cite[Corollary A.3]{Ding2018IT}: $\RCO(V) = \alphap{1}$ and $\Patp{1}$ is the finest maximizer of \eqref{eq:MinSumRatePat}.

\subsubsection{$\CoordSatCapFus$ Algorithm}

In the MDA algorithm proposed in \cite{Ding2016NetCod,Ding2018IT}, the main task is to solve the minimization problem \eqref{eq:Dilworth} for a given value of $\alpha$ by calling the $\CoordSatCapFus$ algorithm in Algorithm~\ref{algo:CoordSatCapFus}.
In step~\ref{step:MinFus}, the set function $\FuU{\alpha}$ is defined as
\begin{equation} \label{eq:FusFunc}
    \FuU{\alpha}(\TX) = \Fu{\alpha}(\TX) - r_{\alpha}(\TX), \quad \forall \X \subseteq \Qat{\alpha}{V_i},
\end{equation}
where $r_\alpha (X) = \sum_{i \in X} r_{\alpha,i} $ is the sum-rate function of the rate vector $\rv_{\alpha,V} = (r_{\alpha,i} \colon i \in V)$ for a given $\alpha$.

The $\CoordSatCapFus$ algorithm is based on the min-max relationship \cite[Section~2.3]{Fujishige2005}\cite[Lemma~23]{Ding2018IT}
    \begin{subequations}
        \begin{align}
            &\min\Set{ \FuU{\alpha}(\TX) \colon \Set{i} \in \X \subseteq \Qat{\alpha}{V_i}} \label{eq:Fusion} \\
            & \qquad\qquad\qquad  = \max \Set{\xi \colon \rv_V + \xi \chi_i \in P(\Fu{\alpha},\leq)}.  \label{eq:MinMax}
        \end{align}
    \end{subequations}
where \eqref{eq:Fusion} is a SFM problem \cite[Section~V-B]{Ding2018IT} and the maximum of \eqref{eq:MinMax} is called the \emph{saturation capacity}. The idea is to saturate each dimension $i$ of a rate vector $\rv_{\alpha,V} \in P(\Fu{\alpha},\leq)$ until it reaches the base polyhedron $B(\FuHat{\alpha},\leq)$ with $r_{\alpha}(V) = \FuHat{\alpha}(V)$ and $\Qat{\alpha}{V}$ being updated to the finest minimizer of $\min_{\Pat\in \Pi(V)} \Fu{\alpha}[\Pat]$ \cite[Section~V-B]{Ding2018IT}.
Note, we use the notation $\Qat{\alpha}{V_i}$ in Algorithm~\ref{algo:CoordSatCapFus} since we will show in Section ~\ref{subsec:PrePar} that $\Qat{\alpha}{V_i} = \bigwedge \argmin_{\Pat \in \Pi(V_i)} \Fu{\alpha}[\Pat]$ after step~\ref{step:Updates} for all $i$.

Based on the outputs of the $\CoordSatCapFus$ algorithm and the properties of the PSP in Lemma~\ref{lemma:AlphaAdapt} in Appendix~\ref{app:AlphaAdapt}, the MDA algorithm proposed in \cite{Ding2018IT} updates $\alpha$, the estimation of the minimum sum-rate $\RCO(V)$, towards the optimal one and finally returns an optimal rate vector $\rv_{\RCO(V),V} \in B(\FuHat{\RCO(V)},\leq)$ and a partition $\Qat{\RCO(V)}{V}$ being the finest maximizer of \eqref{eq:MinSumRatePat} \cite[Theorem 17]{Ding2018IT}.

        \begin{algorithm} [t]
	       \label{algo:CoordSatCapFus}
	       \small
	       \SetAlgoLined
	       \SetKwInOut{Input}{input}\SetKwInOut{Output}{output}
	       \SetKwFor{For}{for}{do}{endfor}
            \SetKwRepeat{Repeat}{repeat}{until}
            \SetKwIF{If}{ElseIf}{Else}{if}{then}{else if}{else}{endif}
	       \BlankLine
           \Input{$\alpha$, $V$ and $H$}
	       \Output{$\rv_{\alpha,V} \in B(\FuHat{\alpha},\leq)$ and $\Qat{\alpha}{V} = \bigwedge \argmin_{\Pat\in \Pi(V)} \Fu{\alpha}[\Pat]$ }
	       \BlankLine
            Let $\rv_{\alpha,V} \leftarrow (\alpha - H(V)) \chi_V$ so that $\rv_{\alpha,V} \in P(\Fu{\alpha},\leq)$\;
            Initiate $ r_{\alpha,1} \leftarrow \Fu{\alpha}(\Set{1})$ and $\Qat{\alpha}{V_1} \leftarrow \Set{\Set{1}}$\;
            \For{$i=2$ \emph{\KwTo} $|V|$}{
                $\Qat{\alpha}{V_i} \leftarrow \Qat{\alpha}{V_{i-1}} \sqcup \Set{\Set{i}}$ \label{step:PatIni} \;
                $\U{\alpha}{V_i} \leftarrow \bigcap \argmin\Set{ \FuU{\alpha}(\TX) \colon \Set{i} \in \X \subseteq \Qat{\alpha}{V_i}}$\; \label{step:MinFus}
                Update $\rv_{\alpha,V}$ and $\Qat{\alpha}{V_i}$: \label{step:Updates}
                \begin{equation}
                    \begin{aligned}
                        \rv_{\alpha,V_i} &\leftarrow \rv_{\alpha,V_i} + \FuU{\alpha}(\TU{\alpha}{V_i}) \chi_{i}; \\
                        \Qat{\alpha}{V_i} &\leftarrow (\Qat{\alpha}{V_i} \setminus \U{\alpha}{V_i}) \sqcup \Set{ \TU{\alpha}{V_i} };
                    \end{aligned} \nonumber
                \end{equation}
            }
            return $\rv_{\alpha,V}$ and $\Qat{\alpha}{V}$\;
	   \caption{$\CoordSatCapFus$ Algorithm \cite[Algorithm~3]{Ding2018IT}}
	   \end{algorithm}

\section{Parametric Approach}
\label{sec:ParAlgo}

In this section, we reveal the structured properties of the partition $\Qat{\alpha}{V_i}$ and rate vector $\rv_{\alpha,V}$ in the $\CoordSatCapFus$ algorithm and propose a parametric (PAR) algorithm showing that, instead of running the $\CoordSatCapFus$ algorithm for a particular value of $\alpha$, we can obtain $\Qat{\alpha}{V_i}$ and $\rv_{\alpha,V}$ for all $\alpha$ in each iteration $i$.\footnote{In this paper, when we say for all $\alpha$, we mean for all $\alpha \in [0, H(V)]$ since the minimum sum-rate must be in the range $[0, H(V)]$.}
We will show in Section~\ref{sec:Complexity} that this PAR algorithm leads to a complexity reduction.

\subsection{Observations}
\label{subsec:PrePar}

For each $i$, we observe the values of $\Qat{\alpha}{V_i}$ and $\rv_{\alpha,V_i}$ in $\alpha$ in the $\CoordSatCapFus$ algorithm and have the following result.

\begin{proposition} \label{prop:preamble}
    In each iteration $i$ of Algorithm~\ref{algo:CoordSatCapFus}, $\Qat{\alpha}{V_i} = \bigwedge \argmin_{\Pat \in \Pi(V_i)} \Fu{\alpha}[\Pat]$ and $\rv_{\alpha,V_i} \in B(\FuHat{\alpha}^{V_i},\leq)$ for all $\alpha$ after step~\ref{step:Updates}.
\end{proposition}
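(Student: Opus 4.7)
The plan is to prove both claims together by induction on $i$ from $1$ to $|V|$, with the understanding that the statement ``for all $\alpha$'' reduces to a pointwise argument since Algorithm~\ref{algo:CoordSatCapFus} runs independently at each value of $\alpha$.

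\textbf{Base case} $i=1$: the partition lattice $\Pi(V_1)$ consists solely of $\{\{1\}\}$, so trivially $\bigwedge \argmin_{\Pat \in \Pi(V_1)} \Fu{\alpha}[\Pat] = \{\{1\}\}$ and $\FuHat{\alpha}^{V_1}(V_1) = \Fu{\alpha}(\{1\})$. The initialization $r_{\alpha,1} \leftarrow \Fu{\alpha}(\{1\})$ and $\Qat{\alpha}{V_1} \leftarrow \{\{1\}\}$ matches this exactly.

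\textbf{Inductive step}: Assume the claim at $i-1$. First I would verify that, on entering iteration $i$ right after step~\ref{step:PatIni}, the restricted rate vector $\rv_{\alpha,V_i}$ (whose last coordinate is still the initialization value $\alpha - H(V) \leq 0$) lies in $P(\Fu{\alpha}^{V_i},\leq)$: for any $X \subseteq V_i$ with $i \in X$, submodularity of entropy and the induction hypothesis give $r_\alpha(X) = r_\alpha(X \setminus \{i\}) + (\alpha - H(V)) \leq \Fu{\alpha}(X \setminus \{i\}) \leq \Fu{\alpha}(X)$, where the last inequality uses $\Fu{\alpha}(X) - \Fu{\alpha}(X \setminus \{i\}) = H(\{i\} \mid X \setminus \{i\}) \geq 0$. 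Next, I would apply the min--max relation \eqref{eq:Fusion}--\eqref{eq:MinMax} to the reduction $\Fu{\alpha}^{V_i}$: the value $\FuU{\alpha}(\TU{\alpha}{V_i})$ computed in step~\ref{step:MinFus} equals the saturation capacity along coordinate $i$, so the update in step~\ref{step:Updates} produces a rate vector satisfying $r_\alpha(V_i) = \FuHat{\alpha}^{V_i}(V_i)$, that is, $\rv_{\alpha,V_i} \in B(\FuHat{\alpha}^{V_i}, \leq)$.

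For the partition claim, I would invoke the standard Dilworth-truncation argument: the finest minimizer of $\Fu{\alpha}[\Pat]$ over $\Pi(V_i)$ is obtained from the finest minimizer over $\Pi(V_{i-1})$ by first appending the singleton $\{i\}$ and then fusing exactly those blocks whose union with $\{i\}$ achieves the tight constraint of $\rv_{\alpha,V_i}$ against $\Fu{\alpha}^{V_i}$. That family of blocks is precisely $\U{\alpha}{V_i} = \bigcap \argmin \{\FuU{\alpha}(\TX) : \{i\} \in \X \subseteq \Qat{\alpha}{V_{i-1}} \sqcup \{\{i\}\}\}$, and the smallest (intersection) minimizer exists by the lattice structure recalled in Section~\ref{sec:ExResults}; merging these blocks yields the finest minimizer on $V_i$.

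\textbf{Main obstacle}: the delicate point is the partition claim, specifically ruling out that the true finest minimizer over $\Pi(V_i)$ could further \emph{split} some block of $\Qat{\alpha}{V_{i-1}}$. This is where the induction hypothesis $\Qat{\alpha}{V_{i-1}} = \bigwedge \argmin_{\Pat \in \Pi(V_{i-1})} \Fu{\alpha}[\Pat]$ is essential: because the restriction of any optimal partition on $V_i$ to $V_{i-1}$ must itself minimize $\Fu{\alpha}[\cdot]$ on $\Pi(V_{i-1})$ and hence be coarser than $\Qat{\alpha}{V_{i-1}}$, no such refinement is possible. The rest is bookkeeping showing that the merged family $\TU{\alpha}{V_i}$ supplies the unique finest way to fuse $\Qat{\alpha}{V_{i-1}} \sqcup \{\{i\}\}$ into the finest minimizer on $V_i$.
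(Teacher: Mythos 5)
Your proof is correct in spirit but takes a genuinely different---and considerably longer---route from the paper's. The paper's proof is a one-line reduction: the state of Algorithm~\ref{algo:CoordSatCapFus} at the end of iteration $i$ of the call $\CoordSatCapFus(\alpha,V,H)$ coincides with the final output of a complete run $\CoordSatCapFus(\alpha,V_i,H)$, and the output specification of that run (the partition equals $\bigwedge \argmin_{\Pat \in \Pi(V_i)} \Fu{\alpha}[\Pat]$ and the rate vector lies in $B(\FuHat{\alpha}^{V_i},\leq)$) is already established in \cite{Ding2018IT}; nothing further needs to be argued. You instead re-derive the correctness of $\CoordSatCapFus$ from scratch by induction on $i$. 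The base case and the check that appending the coordinate $r_{\alpha,i}=\alpha-H(V)\leq 0$ keeps $\rv_{\alpha,V_i}$ inside $P(\Fu{\alpha}^{V_i},\leq)$ are fine. However, the two facts your induction step hinges on---that a single saturation along coordinate $i$, starting from a point of $B(\FuHat{\alpha}^{V_{i-1}},\leq)$, lands in $B(\FuHat{\alpha}^{V_i},\leq)$, and that the restriction to $V_{i-1}$ of an optimal partition over $\Pi(V_i)$ is itself optimal over $\Pi(V_{i-1})$ (so that merging-only suffices and no block of $\Qat{\alpha}{V_{i-1}}$ ever needs to be split)---are exactly the nontrivial content of the cited correctness proof, and you invoke them as a ``standard Dilworth-truncation argument'' rather than establish them. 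So your argument essentially re-proves, partly by assertion, what the paper obtains by reference; it buys self-containment at the price of reopening technical lemmas that the paper deliberately treats as a black box, which is the more economical and arguably the intended level of rigor for this proposition.
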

\begin{proof}
    The proof is straightforward for that $\Qat{\alpha}{V_i} = \bigwedge \argmin_{\Pat \in \Pi(V_i)} \Fu{\alpha}[\Pat]$ and $\rv_{\alpha,V_i} \in B(\FuHat{\alpha}^{V_i},\leq)$ are returned by the call $\CoordSatCapFus(\alpha,V_i,H)$.
\end{proof}

Then, $\Qat{\alpha}{V_i}$ for all $\alpha$ is again characterized by the PSP of $V_i$ with the number of critical points bounded by $|V_i|$.
The function $\FuU{\alpha}$ in \eqref{eq:FusFunc} is defined on $2^{\Qat{\alpha}{V_i}}$ where $\Qat{\alpha}{V_i}$ is a segmented partition variable in $\alpha$ and so is $\rv_{\alpha,V_i}$.

\begin{example} \label{ex:main}
    Consider a $5$-user system with
    \begin{equation}
        \begin{aligned}
            \RZ{1} & = (\RW{a},\RW{b},\RW{c},\RW{d},\RW{f},\RW{g},\RW{i},\RW{j}),  \\
            \RZ{2} & = (\RW{a},\RW{b},\RW{c},\RW{f},\RW{i},\RW{j}),\\
            \RZ{3} & = (\RW{e},\RW{f},\RW{h},\RW{i}),   \\
            \RZ{4} & = (\RW{b},\RW{c},\RW{e},\RW{j}), \\
            \RZ{5} & = (\RW{b},\RW{c},\RW{d},\RW{h},\RW{i}),
        \end{aligned}  \nonumber
    \end{equation}
    where each $\RW{j}$ is an independent uniformly distributed random bit.

    We call $\CoordSatCapFus(\alpha,V,H)$ by setting $\alpha = 3$. We initiate $\rv_{3,V} \leftarrow (\alpha - H(V)) \chi_V = (-7,\dotsc,-7)$, update $r_{3,1} = \Fu{3}(\Set{1}) = 1$ and assign $\Qat{3}{V_1} = \Set{\Set{1}}$. For $i  = 2$, we set $\Qat{3}{V_2} = \Set{\Set{1},\Set{2}}$ and consider the minimization problem $\min\Set{ \FuU{3}(\TX) \colon \Set{2} \in \X \subseteq \Qat{3}{V_2}}$. The minimal minimizer is $\U{3}{V_2} = \Set{\Set{2}}$. We do the updates $r_{3,2} = -7 + \FuU{3}(\TU{3}{V_2}) = -1$ and $\Qat{3}{V_2} = \big( \Set{\Set{1},\Set{2}} \setminus \U{3}{V_2} \big) \sqcup \Set{\TU{3}{V_2}} = \Set{\Set{1},\Set{2}}$. We stop here to consider another value of $\alpha$.

    We call $\CoordSatCapFus(\alpha,V,H)$ by setting $\alpha = 6$. We initiate $\rv_{6,V} \leftarrow (\alpha - H(V)) \chi_V = (-4,\dotsc,-4)$ and set $r_{6,1} = \Fu{6}(\Set{1}) = 4$ and $\Qat{6}{V_1} = \Set{\Set{1}}$. We have $\U{6}{V_2} = \Set{\Set{1},\Set{2}} = \bigcap \argmin\Set{ \FuU{6}(\TX) \colon \Set{2} \in \X \subseteq \Qat{6}{V_2}}$ and do the updates $r_{6,2} = -4 + \FuU{6}(\TU{6}{V_2}) = 0$ and $\Qat{6}{V_2} = \big( \Set{\Set{1},\Set{2}} \setminus \U{6}{V_2} \big) \sqcup \Set{\TU{6}{V_2}} = \Set{\Set{1,2}}$.

    One can show that $\Set{\Set{1},\Set{2}} = \bigwedge \argmin_{\Pat \in \Pi(V_2)} \Fu{3}[\Pat] $ and $\Set{\Set{1,2}} = \bigwedge \argmin_{\Pat \in \Pi(V_2)} \Fu{6}[\Pat] $. In fact, repeating the above procedure for all $\alpha$, we have the segmented $\rv_{\alpha,\Set{1,2}}$ and $\Qat{\alpha}{V_2}$
    \begin{equation} \label{eq:ExQatR}
        \begin{aligned}
            & \rv_{\alpha,V_2} = \begin{cases}
                                    (\alpha-2, \alpha-4) & \alpha \in [0,4], \\
                                    (\alpha-2,0) & \alpha \in (4,10],
                                \end{cases} \\
            & \Qat{\alpha}{V_2} = \begin{cases}
                                    \Set{\Set{1},\Set{2}} & \alpha \in [0,4], \\
                                    \Set{\Set{1,2}} & \alpha \in (4,10],
                                \end{cases}
        \end{aligned}
    \end{equation}
    because of the segmented $\TU{\alpha}{V_2}$
    \begin{equation} \label{eq:ExU}
        \TU{\alpha}{V_2} = \begin{cases}
                             \Set{2} & \alpha \in [0,4],\\
                             \Set{1,2} & \alpha \in (4,10].
                         \end{cases}
    \end{equation}
\end{example}

Proposition~\ref{prop:preamble} suggests that, in each iteration of the $\CoordSatCapFus$ algorithm, we can obtain $\rv_{\alpha,V_i}$ and $\Qat{\alpha}{V_i}$ for all values of $\alpha$, which in fact completes the task of determining the PSP of $V_i$ (see Section~\ref{subsec:PSPExpand}). To do so, it is essential to discuss how to determine $\TU{\alpha}{i}$ for all $\alpha$.
It should also be noted that we automatically know $\U{\alpha}{V_i}$ if $\TU{\alpha}{V_i}$ is obtained in that $\U{\alpha}{V_i} = \Set{ C \in \Qat{\alpha}{V_{i}} \colon C \subseteq \TU{\alpha}{V_i}}$.\footnote{This means that $\U{\alpha}{V_i}$ is the decomposition of $\TU{\alpha}{V_i}$ by $\Qat{\alpha}{V_{i}}$. Here, we should use the value of $\Qat{\alpha}{V_i}$ in the minimization problem $\min\Set{ \FuU{\alpha}(\TX) \colon \Set{i} \in \X \subseteq \Qat{\alpha}{V_i}}$ before the updates in step~\ref{step:Updates}. }
For example, for $\TU{\alpha}{V_2}$ in \eqref{eq:ExU},
    \begin{equation}
        \U{\alpha}{V_2} = \begin{cases}
                             \Set{\Set{2}} & \alpha \in [0,4],\\
                             \Set{\Set{1},\Set{2}} & \alpha \in (4,10].
                         \end{cases}\nonumber
    \end{equation}

\subsection{Strong Map Property}

If $\Qat{\alpha}{V_i} = \bigwedge \argmin_{\Pat \in \Pi(V_i)} \Fu{\alpha}[\Pat]$ after the update $\Qat{\alpha}{V_i} \leftarrow (\Qat{\alpha}{V_i} \setminus \U{\alpha}{V_i}) \sqcup \Set{ \TU{\alpha}{V_i} }$ in step~\ref{step:Updates} in Algorithm~\ref{algo:CoordSatCapFus} based on Proposition~\ref{prop:preamble}, $\Qat{\alpha}{V_i}$ must satisfy the properties of the PSP in Section~\ref{subsec:PSP}: $\Qat{\alpha}{V_i}$ gets monotonically coarser in $\alpha$ and is characterized by a finite number of critical points.\footnote{$\Qat{\alpha}{V_i}$ is in fact the PSP of $V_i$ with an offset in $\alpha$. See Section~\ref{subsec:PSPExpand}.}
It necessarily means the we must have $\TU{\alpha}{V_i}$ being segmented and the size of $\TU{\alpha}{V_i}$ must increase in $\alpha$.
This observations can be justified by the strong map property of the function $\FuU{\alpha}$, which also states that all critical points that characterize the segmented $\TU{\alpha}{V_i}$ can be determined by the parametric submodular function minimization (PSFM) algorithms.

\begin{definition}[strong map {\cite[Section~4.1]{Fujishige2009PP}}]
    For two distributive lattices $\D_1,\D_2 \subseteq 2^V$,\footnote{A group of sets $\D$ form a distributive lattice if, for all $X,Y \in \D$, $X \cap Y \in \D$ and $X \cup Y \in \D$ \cite[Section~3.2]{Fujishige2005}.}
    and submodular functions $\Fu{1} \colon \D_1 \mapsto \Real$ and $\Fu{2} \colon \D_2 \mapsto \Real$, $\Fu{1}$ and $\Fu{2}$ form a strong map, denoted by $\Fu{1} \rightarrow \Fu{2}$, if
    \begin{equation}\label{eq:StrongMap}
        \Fu{1}(Y) - \Fu{1}(X) \geq \Fu{2}(Y) - \Fu{2}(X)
    \end{equation}
    for all $X,Y \in \D_1 \cap \D_2$ such that $X \subseteq Y$. The strong map is strict, denoted by $ \Fu{1} \twoheadrightarrow \Fu{2}$, if $\Fu{1}(Y) - \Fu{1}(X) > \Fu{2}(Y) - \Fu{2}(X)$ for all $X \subsetneq Y$.
\end{definition}

\begin{theorem} \label{theo:StrongMap}
    In each iteration $i$ of the $\CoordSatCapFus$ algorithm, $\FuU{\alpha}$ forms a \textbf{strict strong map} in $\alpha$:
        $$ \FuU{\alpha} \twoheadrightarrow \FuU{\alpha'}, \quad \forall \alpha, \alpha' \colon \alpha < \alpha'. $$
\end{theorem}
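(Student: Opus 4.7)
My plan is to reduce the strict strong map inequality to a strict monotonicity statement for a Dilworth truncation on $V_{i-1}$. Since the algorithm only evaluates $\FuU{\alpha}(\TX)$ for $\X$ containing $\Set{i}$, I would work on the lattice $\D_\alpha=\{\TX\colon \Set{i}\in\X\subseteq\Qat{\alpha}{V_i}\}\subseteq 2^{V_i}$, a distributive sublattice with minimum $\Set{i}$ and maximum $V_i$. By Proposition~\ref{prop:preamble} combined with the PSP coarsening property of Section~\ref{subsec:PSP}, $\Qat{\alpha}{V_{i-1}}\preceq\Qat{\alpha'}{V_{i-1}}$ whenever $\alpha<\alpha'$, hence $\D_{\alpha'}\subseteq\D_\alpha$. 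For any $X\subsetneq Y$ in $\D_\alpha\cap\D_{\alpha'}=\D_{\alpha'}$, both $X$ and $Y$ contain $i$, so $A=Y\setminus X$ is a nonempty subset of $V_{i-1}$ that is simultaneously a fusion of cells of $\Qat{\alpha}{V_{i-1}}$ and of $\Qat{\alpha'}{V_{i-1}}$.

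I would then use that $X,Y$ are both nonempty to cancel the $\alpha$-offset inside $\Fu{\alpha}$:
\[
\FuU{\alpha}(Y)-\FuU{\alpha}(X)=H(Y)-H(X)-r_\alpha(A),
\]
and the analogous identity at $\alpha'$ gives
\[
\bigl[\FuU{\alpha}(Y)-\FuU{\alpha}(X)\bigr]-\bigl[\FuU{\alpha'}(Y)-\FuU{\alpha'}(X)\bigr]=r_{\alpha'}(A)-r_\alpha(A).
\]
Hence the strict strong map is equivalent to $r_{\alpha'}(A)>r_\alpha(A)$ for every nonempty fusion $A\subseteq V_{i-1}$.

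To handle the right-hand side I would first identify $r_\alpha(A)$ with a Dilworth truncation. Proposition~\ref{prop:preamble} gives $\rv_{\alpha,V_{i-1}}\in B(\FuHat{\alpha}^{V_{i-1}},\leq)$, so $r_\alpha(C)\leq\Fu{\alpha}(C)$ on each cell $C\in\Qat{\alpha}{V_{i-1}}$; summing and using that $\Qat{\alpha}{V_{i-1}}$ is a minimizer of $\Fu{\alpha}[\Pat]$ on $V_{i-1}$ forces cell-by-cell equality $r_\alpha(C)=\Fu{\alpha}(C)$. Summing over the cells inside $A$ then yields $r_\alpha(A)=\Fu{\alpha}[\Pat_A]$ for the induced partition $\Pat_A$ of $A$, and a sandwich between $\FuHat{\alpha}(A)\leq\Fu{\alpha}[\Pat_A]$ and the base-polyhedron bound $r_\alpha(A)\leq\FuHat{\alpha}(A)$ yields $r_\alpha(A)=\FuHat{\alpha}(A)$.

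Finally, I would prove the strict increase $\FuHat{\alpha'}(A)-\FuHat{\alpha}(A)\geq\alpha'-\alpha>0$ by picking any $\Pat^*\in\argmin_{\Pat\in\Pi(A)}\Fu{\alpha'}[\Pat]$ and writing $\FuHat{\alpha'}(A)=\Fu{\alpha'}[\Pat^*]=\Fu{\alpha}[\Pat^*]+|\Pat^*|(\alpha'-\alpha)\geq\FuHat{\alpha}(A)+(\alpha'-\alpha)$, since $|\Pat^*|\geq 1$. Chaining with the previous step gives $r_{\alpha'}(A)-r_\alpha(A)\geq\alpha'-\alpha>0$, completing the strict strong map. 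The main obstacle is locating the right lattice on which to read off the strong map: once one restricts to fusions containing $\Set{i}$ and recognizes that this sublattice shrinks as $\alpha$ grows so that the awkward $X=\emptyset$ case is automatically excluded, the remaining computations (the cell-by-cell tightness from Proposition~\ref{prop:preamble} and the slope-$\geq 1$ monotonicity of the Dilworth truncation) are routine.
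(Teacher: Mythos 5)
Your proof is correct and follows essentially the same route as the paper's: both reduce the strong-map inequality to $r_{\alpha'}(\TY\setminus\TX) - r_{\alpha}(\TY\setminus\TX) > 0$ for nonempty $\TY\setminus\TX$, identify $r_{\alpha}(\TY\setminus\TX) = \FuHat{\alpha}(\TY\setminus\TX)$ via tightness of $\rv_{\alpha,V_{i-1}}$ on the cells of $\Qat{\alpha}{V_{i-1}}$, and conclude from strict monotonicity of the Dilworth truncation. The only difference is one of exposition: you derive the cell-wise equality $r_{\alpha}(C)=\Fu{\alpha}(C)$ and the slope-$\geq 1$ bound from first principles, whereas the paper cites these facts from earlier results.
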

\begin{proof}
    For any $\X \subseteq \Qat{\alpha}{V_i}$ and $\Y \subseteq \Qat{\alpha'}{V_i}$ such that $\Set{i} \in \TX \subseteq \TY$ and $\FuU{\alpha}$ and $\FuU{\alpha'}$ are both defined on $\TX$ and $\TY$, we have $i \notin \TY \setminus \TX$. Also, there exist $\M \subseteq \Qat{\alpha}{V_i}$ and $\N \subseteq \Qat{\alpha'}{V_i}$ such that $\TM = \TN = \TY \setminus \TX$ (with $\M \preceq \N$), $r_{\alpha}(\TY \setminus \TX) = \Fu{\alpha}[\M] = \FuHat{\alpha}(\TY \setminus \TX)$ and $r_{\alpha'}(\TY \setminus \TX) = \Fu{\alpha'}[\N] = \FuHat{\alpha'}(\TY \setminus \TX)$.\footnote{The following holds for all $\alpha$ and $i$: (a) $\FuHat{\alpha}(\TX) = \FuHat{\alpha}[X]$ for all $\X \subseteq \Qat{\alpha}{V_i}$ \cite[Theorem~38 and Lemma~39]{Ding2018IT}; (b) $\FuHat{\alpha}(C) = \Fu{\alpha}(C) = r_{\alpha}(C)$ for all $C \in \Qat{\alpha}{V_i}$ \cite[proof of Theorem~38]{Ding2018IT}.}
    Then,
    \begin{equation}
        \begin{aligned}
            & \FuU{\alpha}(\TY) - \FuU{\alpha}(\TX) - \FuU{\alpha'}(\TY) + \FuU{\alpha'}(\TX) \\
            &\qquad\qquad\qquad = r_{\alpha'}(\TY \setminus \TX) - r_{\alpha}(\TY \setminus \TX)\\
            &\qquad\qquad\qquad = \begin{cases}
                    0 & \TX = \TY, \\
                    \FuHat{\alpha'} (\TY \setminus \TX)  - \FuHat{\alpha} (\TY \setminus \TX) & \TX \subsetneq \TY,
                \end{cases}
        \end{aligned} \nonumber
    \end{equation}
    where $\FuHat{\alpha'} (\TY \setminus \TX)  - \FuHat{\alpha} (\TY \setminus \TX) > 0$ for all $\alpha$ and $\alpha'$ such that $\alpha < \alpha'$ since $\FuHat{\alpha}(\TY \setminus \TX)$ is strictly increasing in $\alpha$ (see Section~\ref{subsec:PSP}).
\end{proof}

The strict strong map property directly leads to the structured property of $\TU{\alpha}{V_i}$ in $\alpha$ based on the results in \cite{Fujishige2009PP}.

\begin{lemma}{\cite[Theorems~26 to 28]{Fujishige2009PP}} \label{lemma:PP}
    In each iteration of the $\CoordSatCapFus$ algorithm, the minimal minimizer $\U{\alpha}{V_i}$ of $\min\Set{ \FuU{\alpha}(\TX) \colon \Set{i} \in \X \subseteq \Qat{\alpha}{V_i}}$ satisfies $\TU{\alpha}{V_i} \subseteq \TU{\alpha'}{V_i}$ for all $\alpha < \alpha'$. In addition, $\TU{\alpha}{V_i}$ for all $\alpha$ is fully characterized by $q < |V_i| - 1$ critical points
    $ 0 \leq \alpha_q < \dotsc < \alpha_1 < \alpha_0 = H(V) $ and the corresponding $\TUp{j} = \TU{\alpha_j}{V_i}$ for all $j \in \Set{0,\dotsc,q}$ forms a set chain
    $$ \Set{i} = \TUp{q} \subsetneq  \dotsc \subsetneq \TUp{1} \subsetneq \TUp{0} = V_i $$
    such that $\TU{\alpha}{V_i} = \TUp{q} = \Set{i}$ for all $\alpha \in [0,\alpha_q]$ and $\TU{\alpha}{V_i} = \TUp{j}$ for all $\alpha \in (\alpha_{j+1}, \alpha_j]$ such that $j \in \Set{0,\dotsc,q-1}$.\footnote{It should be noted that the value of $\alphap{j}$s in the PSP and $\alpha_{j}$s in Lemma~\ref{lemma:PP} do not necessarily coincide and the critical points $\alpha_j$s are different for $\min\Set{ \FuU{\alpha}(\TX) \colon \Set{i} \in \X \subseteq \Qat{\alpha}{V_i}}$ with a different $i$. } \hfill\QED
\end{lemma}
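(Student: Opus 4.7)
The plan is to deduce the lemma directly from the strict strong map property $\FuU{\alpha} \twoheadrightarrow \FuU{\alpha'}$ established in Theorem~\ref{theo:StrongMap} by invoking the parametric submodular minimization framework of \cite[Theorems~26--28]{Fujishige2009PP}. The hypotheses of those theorems require (i) a family of submodular functions on a distributive lattice, and (ii) the strict strong map relation across the parameter. Submodularity of each $\FuU{\alpha}$ is inherited from the submodularity of $H$ and the modularity of $r_\alpha$ in definition \eqref{eq:FusFunc}, while (ii) is exactly Theorem~\ref{theo:StrongMap}. The theorems then hand us both the monotonicity $\TU{\alpha}{V_i} \subseteq \TU{\alpha'}{V_i}$ for $\alpha < \alpha'$ and the existence of finitely many breakpoints.

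To verify the monotonicity by a first-principles uncrossing argument (which is what underlies the Fujishige theorems), I would let $A = \TU{\alpha}{V_i}$ and $B = \TU{\alpha'}{V_i}$, both containing $i$, and suppose for contradiction that $A \not\subseteq B$, i.e., $A \cap B \subsetneq A$. Submodularity of $\FuU{\alpha'}$ combined with the minimality of $B$ among $\{i\}$-containing minimizers at $\alpha'$ gives $\FuU{\alpha'}(A) \geq \FuU{\alpha'}(A \cap B)$, while minimality of $A$ at $\alpha$ gives $\FuU{\alpha}(A) < \FuU{\alpha}(A \cap B)$ (strict, since $A$ is the \emph{minimal} minimizer containing $i$). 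Subtracting yields $\FuU{\alpha}(A) - \FuU{\alpha}(A \cap B) < 0 \leq \FuU{\alpha'}(A) - \FuU{\alpha'}(A \cap B)$, contradicting the strict strong map inequality of Theorem~\ref{theo:StrongMap} applied to $X = A \cap B \subsetneq Y = A$.

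For the chain structure and the bound $q < |V_i| - 1$, I would observe that each $\TU{\alpha}{V_i}$ is a subset of $V_i$ containing $i$, so $|\TU{\alpha}{V_i}| \in \{1,\dotsc,|V_i|\}$. Monotonicity in $\alpha$ forces the sizes to form a non-decreasing integer sequence, hence at most $|V_i|-1$ jumps occur, each corresponding to one critical point $\alpha_j$. Since at $\alpha=0$ the saturation capacity argument forces $\TU{0}{V_i} = \{i\}$ and at $\alpha = H(V)$ the Dilworth truncation collapses so that $\TU{H(V)}{V_i} = V_i$, the chain attains both endpoints, producing the claimed $\TUp{q} = \{i\} \subsetneq \dotsc \subsetneq \TUp{0} = V_i$ with $\TU{\alpha}{V_i} = \TUp{j}$ on $(\alpha_{j+1},\alpha_j]$.

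The main obstacle I anticipate is purely bookkeeping rather than conceptual: the lattices $2^{\Qat{\alpha}{V_i}}$ vary with $\alpha$, so the strict strong map inequality must be evaluated on a common sublattice. The way Theorem~\ref{theo:StrongMap} sidesteps this, which I would also use here, is to work not with partition elements but with their fusions $\TX, \TY \subseteq V_i$, and to express the difference $\FuU{\alpha}(\TY) - \FuU{\alpha}(\TX)$ as the difference of two Dilworth truncations $\FuHat{\alpha}(\TY \setminus \TX)$ which is well defined intrinsically on $V_i$ and strictly increasing in $\alpha$. Once this is in place, the Fujishige theorems apply verbatim and both the monotonicity and the finite breakpoint structure follow.
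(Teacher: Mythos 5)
The paper does not prove this lemma; it states it as a direct citation of Fujishige's Theorems~26--28 (note the terminal QED with no proof environment), viewing it as an immediate consequence of the strict strong map property established in Theorem~\ref{theo:StrongMap} together with the general theory of parametric submodular minimization over strong map sequences. Your top-level route — check submodularity of each $\FuU{\alpha}$, invoke Theorem~\ref{theo:StrongMap}, hand the resulting strong map sequence to the cited Fujishige theorems — is exactly the paper's.

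Your supplementary first-principles uncrossing argument, however, has a genuine gap rather than the "bookkeeping" you describe. You evaluate $\FuU{\alpha'}$ at $A = \TU{\alpha}{V_i}$, $A \cap B$ and $A \cup B$ in both the submodular exchange inequality and the strong map inequality. But $\FuU{\alpha'}$ is defined only on fusions $\TX$ with $\X \subseteq \Qat{\alpha'}{V_i}$, and $A$ is a union of blocks of the strictly \emph{finer} partition $\Qat{\alpha}{V_i}$; there is no reason for $A$ (nor $A\cap B$, $A\cup B$) to be a union of blocks of $\Qat{\alpha'}{V_i}$. So $\FuU{\alpha'}(A)$ may simply not exist, and the chain of inequalities collapses. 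The fix you propose — rewriting $\FuU{\alpha}(\TY) - \FuU{\alpha}(\TX)$ as a difference of Dilworth truncations $\FuHat{\alpha}(\TY\setminus\TX)$, the device used inside the proof of Theorem~\ref{theo:StrongMap} — repairs the strong map inequality, which the paper carefully states only for pairs on which both functions are defined, but it gives no analogous reformulation of the submodularity inequality $\FuU{\alpha'}(A) + \FuU{\alpha'}(B) \ge \FuU{\alpha'}(A\cap B) + \FuU{\alpha'}(A\cup B)$ at an $A$ outside the domain of $\FuU{\alpha'}$. This changing-lattice subtlety is precisely what the cited Fujishige theorems are built to handle via the principal partition machinery; the safe move, which the paper makes, is to invoke them rather than reprove them. (A minor additional note: your counting of at most $|V_i|-1$ jumps gives $q \le |V_i|-1$, one weaker than the stated bound $q < |V_i|-1$, though this discrepancy is immaterial to the complexity claims.)
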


\begin{example} \label{ex:PP}
    In Example~\ref{ex:main}, we have $\TU{\alpha}{V_2}$ in \eqref{eq:ExU} characterized by the critical points $\alpha_1 = 4$ and $\alpha_0 = H(V) = 10$ with $\TUp{1} = \Set{2}$ and $\TUp{0} = \Set{1,2}$ such that
    $ \Set{2} = \TUp{1} \subsetneq  \TUp{0} = V_2 $,
    $\TU{\alpha}{V_2} = \TUp{1}$ for $\alpha \in [0,\alpha_1]$ and $\TU{\alpha}{V_2} = \TUp{0}$ for $\alpha \in (\alpha_1,\alpha_0]$.

    We continue the procedure in Example~\ref{ex:main} for $i = 3$ by considering the problem $\min\Set{ \FuU{\alpha}(\TX) \colon \Set{3} \in \X \subseteq \Qat{\alpha}{V_2} \sqcup \Set{\Set{3}}}$ where $\Qat{\alpha}{V_2}$ and $\rv_{\alpha,V_2}$ are in \eqref{eq:ExQatR}. We have
    \begin{equation}  \label{eq:UpV3}
        \TU{\alpha}{V_3} = \begin{cases}
                             \Set{3} & \alpha \in [0,8],\\
                             \Set{1,2,3} & \alpha \in (8,10]
                         \end{cases}
    \end{equation}
    that is determined by the critical points $\alpha_1 = 8$ and $\alpha_0 = H(V) = 10$ with $\TUp{1} = \Set{3}$ and $\TUp{0} = \Set{1,2,3}$ such that
    $ \Set{3} = \TUp{1} \subsetneq  \TUp{0} = V_3 $.
    After the updates in step~\ref{step:Updates}, we have
        \begin{equation}  \label{eq:UpdateV3}
        \begin{aligned}
            & \rv_{\alpha,V_3} = \begin{cases}
                                    (\alpha-2, \alpha-4, \alpha-6) & \alpha \in [0,4], \\
                                    (\alpha-2, 0, \alpha-6) & \alpha \in (4,8], \\
                                    (\alpha-2,0,2) & \alpha \in (8,10],
                                \end{cases} \\
            & \Qat{\alpha}{V_3} = \begin{cases}
                                    \Set{\Set{1},\Set{2},\Set{3}} & \alpha \in [0,4], \\
                                    \Set{\Set{1,2},\Set{3}} & \alpha \in (4,8], \\
                                    \Set{\Set{1,2,3}} & \alpha \in (8,10].
                                \end{cases}
        \end{aligned}
    \end{equation}
\end{example}

\subsection{Parametric Method}

Lemma~\ref{lemma:PP} directly suggests the PAR algorithm in Algorithm~\ref{algo:ParAlgo}, where, in each iteration $i$, we determine the values of $\Qat{\alpha}{V_i}$ and $\rv_{\alpha,V_i}$ for all $\alpha$.\footnote{We call Algorithm~\ref{algo:ParAlgo} a parametric algorithm since the variables $\Qat{\alpha}{V_i}$ and $\rv_{\alpha,V_i}$ in each iteration $i$ are parameterized by $\alpha$ and $\TU{\alpha}{V_i}$ for all $\alpha$ can be determined by a PSFM algorithm that is parameterized by $\alpha$.}

      \begin{algorithm} [t]
	       \label{algo:ParAlgo}
	       \small
	       \SetAlgoLined
	       \SetKwInOut{Input}{input}\SetKwInOut{Output}{output}
	       \SetKwFor{For}{for}{do}{endfor}
            \SetKwRepeat{Repeat}{repeat}{until}
            \SetKwIF{If}{ElseIf}{Else}{if}{then}{else if}{else}{endif}
	       \BlankLine
           \Input{$V$ and $H$}
	       \Output{segmented variables $\rv_V \in B(\FuHat{\alpha},\leq)$ and $\Qat{\alpha}{V} = \bigwedge \argmin_{\Pat\in \Pi(V)} \Fu{\alpha}[\Pat]$ for all $\alpha$}
	       \BlankLine
            $\rv_{\alpha,V} \leftarrow (\alpha - H(V)) \chi_V$ for all $\alpha$\;
            $ r_{\alpha,1} \leftarrow \Fu{\alpha}(\Set{1})$ and $\Qat{\alpha}{V_1} \leftarrow \Set{\Set{1}}$ for all $\alpha$\;
            \For{$i=2$ \emph{\KwTo} $|V|$}{
                $\Qat{\alpha}{V_i} \leftarrow \Qat{\alpha}{V_{i-1}} \sqcup \Set{\Set{i}}$ for all $\alpha$\;
                Obtain the critical points $\Set{\alpha_j \colon j \in \Set{0,\dotsc,q}}$ and $\Set{\TUp{j} \colon j \in \Set{0,\dotsc,q}}$ that determine the minimal minimizer $\U{\alpha}{V_i}$ of $\min\Set{ \FuU{\alpha}(\TX) \colon \Set{i} \in \X \subseteq \Qat{\alpha}{V_i}}$ for all $\alpha$, e.g., by the StrMap algorithm in Algorithm~\ref{algo:PP}\;
                Let $\Gamma_j \leftarrow (\alpha_{j+1},\alpha_{j}]$ for all $j \in \Set{0,\dotsc,q-1}$ and $\Gamma_q \leftarrow [0,\alpha_p]$ and update $\rv_V$ and $\Qat{\alpha}{V_i}$ by \label{step:UpdatesPar}
                \begin{equation}
                    \begin{aligned}
                        \rv_{\alpha,V_i} &\leftarrow \rv_{\alpha,V_i} + \FuU{\alpha}(\TUp{j}) \chi_{i}; \\
                        \Qat{\alpha}{V_i} &\leftarrow (\Qat{\alpha}{V_i} \setminus \Up{j}) \sqcup \Set{ \TUp{j} };
                    \end{aligned} \nonumber
                \end{equation}
                for all $\alpha \in \Gamma_j$;
            }
            return $\rv_V$ and $\Qat{\alpha}{V}$ for all $\alpha$\;
	   \caption{Parametric (PAR) Algorithm}
	   \end{algorithm}

\begin{example} \label{ex:ParAlgo}
    We apply the PAR algorithm to the system in Example~\ref{ex:main}. We first initiate $r_{\alpha,i} = \alpha - H(V)) = \alpha - 10$ for all $i$ and $\alpha$. For $i=1$, we have $\Qat{\alpha}{V_1} = \Set{\Set{1}}$ and $r_{\alpha,1}=\Fu{\alpha}(\Set{1}) = \alpha - 2$ for all $\alpha$. See Fig.~\ref{fig:PSP}(a).

    As shown in Example~\ref{ex:PP}, for $i = 2$ and $i = 3$, we have $\TU{\alpha}{V_2}$ in \eqref{eq:ExU} so that the updated $\rv_{\alpha,V_2}$ and $\Qat{\alpha}{V_2}$ are in \eqref{eq:ExQatR} and $\TU{\alpha}{V_3}$ in \eqref{eq:UpV3} so that the updated $\rv_{\alpha,V_3}$ and $\Qat{\alpha}{V_3}$ are in \eqref{eq:UpdateV3}, respectively. See (b) and (c) in Fig.~\ref{fig:PSP}.

    For $i = 4$, consider the problem $\min\Set{ \FuU{\alpha}(\TX) \colon \Set{4} \in \X \subseteq \Qat{\alpha}{V_4} }$ where $\Qat{\alpha}{V_4} = \Qat{\alpha}{V_3} \sqcup \Set{\Set{4}}$. We have the critical points $\alpha_1 = 7$ and $\alpha_0 = H(V) = 10$ with $\TUp{1} = \Set{4}$ and $\TUp{0} = \Set{1,2,3,4}$ such that
    $ \Set{4} = \TUp{1} \subsetneq  \TUp{0} = V_4 $
    and
    \begin{equation}
        \TU{\alpha}{V_4} = \begin{cases}
                             \Set{4} & \alpha \in [0,7],\\
                             \Set{1,\dotsc,4} & \alpha \in (7,10].
                         \end{cases} \nonumber
    \end{equation}
    We use $\TU{\alpha}{V_4}$ to update $\rv_{\alpha,V}$ and $\Qat{\alpha}{V_4}$ for all $\alpha$ as in step~\ref{step:UpdatesPar} and get
    \begin{equation} \label{eq:UpV4}
        \begin{aligned}
            & \rv_{\alpha,V_4} = \begin{cases}
                                    (\alpha-2, \alpha-4, \alpha-6, \alpha-6) & \alpha \in [0,4], \\
                                    (\alpha-2, 0, \alpha-6, \alpha-6) & \alpha \in (4,7], \\
                                    (\alpha-2, 0, \alpha-6, 8-\alpha) & \alpha \in (7,8], \\
                                    (\alpha-2,0,2,0) & \alpha \in (8,10],
                                \end{cases}\\
            & \Qat{\alpha}{V_4} = \begin{cases}
                                    \Set{\Set{1},\dotsc,\Set{4}} & \alpha \in [0,4], \\
                                    \Set{\Set{1,2},\Set{3},\Set{4}} & \alpha \in (4,7], \\
                                    \Set{\Set{1,\dotsc,4}} & \alpha \in (7,10].
                                \end{cases}
        \end{aligned}
    \end{equation}
    See Fig.~\ref{fig:PSP}(d).

    For $i = 5$, we have the critical points for the problem $\min\Set{ \FuU{\alpha}(\TX) \colon \Set{5} \in \X \subseteq \Qat{\alpha}{V} }$ being $\alpha_2 = 6$ , $\alpha_1 = 6.5$ and $\alpha_0 = H(V) = 10$ with $\TUp{2} = \Set{5}$, $\TUp{1} = \Set{1,2,5}$ and $\TUp{0} = \Set{1,\dotsc,5}$ such that
    \begin{equation}
        \TU{\alpha}{V} = \begin{cases}
                             \Set{5} & \alpha \in [0,6],\\
                             \Set{1,2,5} & \alpha \in (6,6.5], \\
                             \Set{1,\dotsc,5} & \alpha \in (6.5,10].
                         \end{cases} \nonumber
    \end{equation}
    After the updates in step~\ref{step:UpdatesPar}, we have
    \begin{equation} \label{eq:UpV5}
        \begin{aligned}
            & \rv_{\alpha,V} = \begin{cases}
                                    (\alpha-2, \alpha-4, \alpha-6, \alpha-6,\alpha-5) & \alpha \in [0,4] ,\\
                                    (\alpha-2, 0, \alpha-6, \alpha-6,\alpha-5) & \alpha \in (4,6] ,\\
                                    (\alpha-2, 0, \alpha-6, \alpha-6,1) & \alpha \in (6,6.5] ,\\
                                    (\alpha-2, 0, \alpha-6, \alpha-6,14-2\alpha) & \alpha \in (6.5,7] ,\\
                                    (\alpha-2, 0, \alpha-6, 8-\alpha,0) & \alpha \in (7,8] ,\\
                                    (\alpha-2,0,2,0,0) & \alpha \in (8,10] ,
                                \end{cases} \\
            & \Qat{\alpha}{V} = \begin{cases}
                                    \Set{\Set{1},\dotsc,\Set{5}} & \alpha \in [0,4], \\
                                    \Set{\Set{1,2},\Set{3},\Set{4},\Set{5}} & \alpha \in (4,6], \\
                                    \Set{\Set{1,2,5},\Set{3},\Set{4}} & \alpha \in (6,6.5], \\
                                    \Set{\Set{1,\dotsc,5}} & \alpha \in (6.5,10].
                                \end{cases}
        \end{aligned}
    \end{equation}
    See Fig.~\ref{fig:PSP}(e). Here, we finally update to $\Qat{\alpha}{V}$ for all $\alpha$. The corresponding PSP has the critical points $\alphap{3} = 4$, $\alphap{2} = 6$ and $\alphap{1} = 6.5$ and $\alphap{0} = H(V) = 10$ with $\Patp{3} = \Set{\Set{1},\dotsc,\Set{5}}$, $\Patp{2} = \Set{\Set{1,2},\Set{3},\Set{4},\Set{5}}$, $\Patp{1} = \Set{\Set{1,2,5},\Set{3},\Set{4}}$ and $\Patp{0} = \Set{\Set{1,\dotsc,5}}$ so that we know $\RCO(V) = \alphap{1} = 6.5$ is the minimum sum-rate and $\Patp{1} = \Set{\Set{1,2,5},\Set{3},\Set{4}}$ is finest maximizer of problem~\eqref{eq:MinSumRatePat}. We also know an optimal achievable rate vector $\rv_{6.5,V} = (4.5,0,0.5,0.5,1)$.
\end{example}

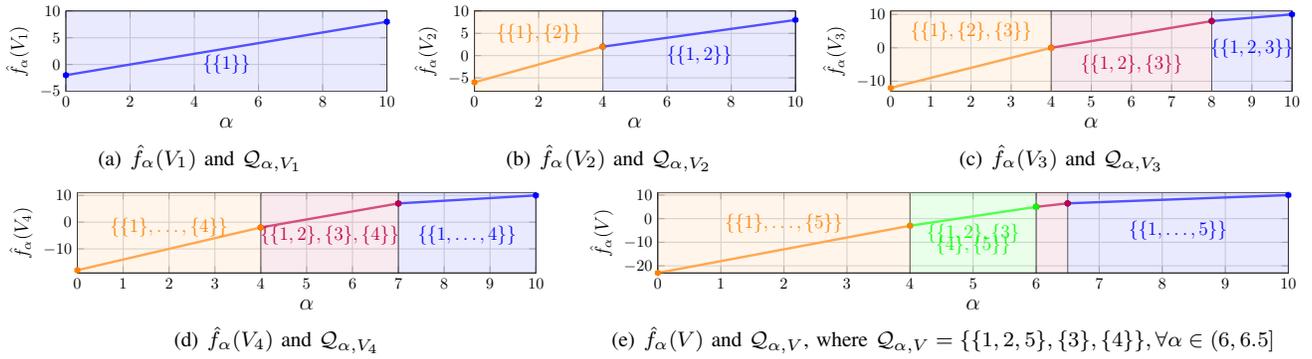
\begin{figure*}[t]
	\centering
    \subfigure[$\FuHat{\alpha}(V_1)$ and $\Qat{\alpha}{V_1}$]{\scalebox{0.6}{
%
%
\definecolor{mycolor1}{rgb}{1,0,1}%

\begin{tikzpicture}[
every pin/.style={fill=yellow!50!white,rectangle,rounded corners=3pt,font=\tiny},
every pin edge/.style={<-}]

\begin{axis}[%
width=2.8in,
height=0.7in,
scale only axis,
xmin=0,
xmax=10,
xlabel={\Large $\alpha$},
xmajorgrids,
ymin=-5,
ymax=10,
ylabel={\large $\FuHat{\alpha}(V_1)$},
ymajorgrids,
legend style={at={(0.65,0.05)},anchor=south west,draw=black,fill=white,legend cell align=left}
]

\addplot [
color=blue,
solid,
line width=1.5pt,
mark=asterisk,
mark options={solid}
]
table[row sep=crcr]{
10 8\\
0 -2\\
};

\addplot[area legend,solid,fill=blue!20,opacity=4.000000e-01]coordinates {
(10,10)
(10,-8)
(0,-8)
(0,10)
};
\node at (axis cs:5,0) {\textcolor{blue}{\large $\Set{\Set{1}}$}};


\end{axis}
\end{tikzpicture}
    \subfigure[$\FuHat{\alpha}(V_2)$ and $\Qat{\alpha}{V_2}$]{\scalebox{0.6}{
%
%
\definecolor{mycolor1}{rgb}{1,0,1}%

\begin{tikzpicture}[
every pin/.style={fill=yellow!50!white,rectangle,rounded corners=3pt,font=\tiny},
every pin edge/.style={<-}]

\begin{axis}[%
width=2.8in,
height=0.7in,
scale only axis,
xmin=0,
xmax=10,
xlabel={\Large $\alpha$},
xmajorgrids,
ymin=-8,
ymax=10,
ylabel={\large $\FuHat{\alpha}(V_2)$},
ymajorgrids,
legend style={at={(0.65,0.05)},anchor=south west,draw=black,fill=white,legend cell align=left}
]

\addplot [
color=blue,
solid,
line width=1.5pt,
mark=asterisk,
mark options={solid}
]
table[row sep=crcr]{
10 8\\
4 2\\
};

\addplot[area legend,solid,fill=blue!20,opacity=4.000000e-01]coordinates {
(10,10)
(10,-8)
(4,-8)
(4,10)
};
\node at (axis cs:7,0) {\textcolor{blue}{\large $\Set{\Set{1,2}}$}};

\addplot [
color=orange,
solid,
line width=1.5pt,
mark=asterisk,
mark options={solid}
]
table[row sep=crcr]{
4 2\\
0 -6 \\
};

\addplot[area legend,solid,fill=orange!20,opacity=4.000000e-01]coordinates {
(4,10)
(4,-8)
(0,-8)
(0,10)
};
\node at (axis cs:2,5) {\textcolor{orange}{\large $\Set{\Set{1},\Set{2}}$}};



\end{axis}
\end{tikzpicture}
    \subfigure[$\FuHat{\alpha}(V_3)$ and $\Qat{\alpha}{V_3}$]{\scalebox{0.6}{
%
%
\definecolor{mycolor1}{rgb}{1,0,1}%

\begin{tikzpicture}[
every pin/.style={fill=yellow!50!white,rectangle,rounded corners=3pt,font=\tiny},
every pin edge/.style={<-}]

\begin{axis}[%
width=3.5in,
height=0.7in,
scale only axis,
xmin=0,
xmax=10,
xlabel={\Large $\alpha$},
xmajorgrids,
ymin=-13,
ymax=11,
ylabel={\large $\FuHat{\alpha}(V_3)$},
ymajorgrids,
legend style={at={(0.65,0.05)},anchor=south west,draw=black,fill=white,legend cell align=left}
]

\addplot [
color=blue,
solid,
line width=1.5pt,
mark=asterisk,
mark options={solid}
]
table[row sep=crcr]{
10 10\\
8 8\\
};

\addplot[area legend,solid,fill=blue!20,opacity=4.000000e-01]coordinates {
(10,11)
(10,-13)
(8,-13)
(8,11)
};
\node at (axis cs:9,0) {\textcolor{blue}{\large $\Set{\Set{1,2,3}}$}};

\addplot [
color=purple,
solid,
line width=1.5pt,
mark=asterisk,
mark options={solid}
]
table[row sep=crcr]{
8 8\\
4 0\\
};

\addplot[area legend,solid,fill=purple!20,opacity=4.000000e-01]coordinates {
(8,11)
(8,-13)
(4,-13)
(4,11)
};
\node at (axis cs:6,-5) {\textcolor{purple}{\large $\Set{\Set{1,2},\Set{3}}$}};

\addplot [
color=orange,
solid,
line width=1.5pt,
mark=asterisk,
mark options={solid}
]
table[row sep=crcr]{
4 0\\
0 -12 \\
};

\addplot[area legend,solid,fill=orange!20,opacity=4.000000e-01]coordinates {
(4,11)
(4,-13)
(0,-13)
(0,11)
};
\node at (axis cs:2,5) {\textcolor{orange}{\large $\Set{\Set{1},\Set{2},\Set{3}}$}};



\end{axis}
\end{tikzpicture}
    \subfigure[$\FuHat{\alpha}(V_4)$ and $\Qat{\alpha}{V_4}$]{\scalebox{0.6}{
%
%
\definecolor{mycolor1}{rgb}{1,0,1}%

\begin{tikzpicture}[
every pin/.style={fill=yellow!50!white,rectangle,rounded corners=3pt,font=\tiny},
every pin edge/.style={<-}]

\begin{axis}[%
width=4in,
height=0.7in,
scale only axis,
xmin=0,
xmax=10,
xlabel={\Large $\alpha$},
xmajorgrids,
ymin=-19,
ymax=11,
ylabel={\large $\FuHat{\alpha}(V_4)$},
ymajorgrids,
legend style={at={(0.65,0.05)},anchor=south west,draw=black,fill=white,legend cell align=left}
]

\addplot [
color=blue,
solid,
line width=1.5pt,
mark=asterisk,
mark options={solid}
]
table[row sep=crcr]{
10 10\\
7 7\\
};

\addplot[area legend,solid,fill=blue!20,opacity=4.000000e-01]coordinates {
(10,11)
(10,-19)
(7,-19)
(7,11)
};
\node at (axis cs:8.5,-5) {\textcolor{blue}{\large $\Set{\Set{1,\dotsc,4}}$}};

\addplot [
color=purple,
solid,
line width=1.5pt,
mark=asterisk,
mark options={solid}
]
table[row sep=crcr]{
7 7\\
4 -2\\
};

\addplot[area legend,solid,fill=purple!20,opacity=4.000000e-01]coordinates {
(7,11)
(7,-19)
(4,-19)
(4,11)
};
\node at (axis cs:5.5,-5) {\textcolor{purple}{\large $\Set{\Set{1,2},\Set{3},\Set{4}}$}};

\addplot [
color=orange,
solid,
line width=1.5pt,
mark=asterisk,
mark options={solid}
]
table[row sep=crcr]{
4 -2\\
0 -18 \\
};

\addplot[area legend,solid,fill=orange!20,opacity=4.000000e-01]coordinates {
(4,11)
(4,-19)
(0,-19)
(0,11)
};
\node at (axis cs:2,-2) {\textcolor{orange}{\large $\Set{\Set{1},\dotsc,\Set{4}}$}};



\end{axis}
\end{tikzpicture}
    \subfigure[$\FuHat{\alpha}(V)$ and $\Qat{\alpha}{V}$, where $\Qat{\alpha}{V} = \Set{\Set{1,2,5},\Set{3},\Set{4}},\forall \alpha \in {(6,6.5]}$]{\scalebox{0.6}{
%
%
\definecolor{mycolor1}{rgb}{1,0,1}%

\begin{tikzpicture}[
every pin/.style={fill=yellow!50!white,rectangle,rounded corners=3pt,font=\tiny},
every pin edge/.style={<-}]

\begin{axis}[%
width=5.5in,
height=0.7in,
scale only axis,
xmin=0,
xmax=10,
xlabel={\Large $\alpha$},
xmajorgrids,
ymin=-23,
ymax=11,
ylabel={\large $\FuHat{\alpha}(V)$},
ymajorgrids,
legend style={at={(0.65,0.05)},anchor=south west,draw=black,fill=white,legend cell align=left}
]

\addplot [
color=blue,
solid,
line width=1.5pt,
mark=asterisk,
mark options={solid}
]
table[row sep=crcr]{
10 10\\
6.5 6.5\\
};

\addplot[area legend,solid,fill=blue!20,opacity=4.000000e-01]coordinates {
(10,11)
(10,-23)
(6.5,-23)
(6.5,11)
};
\node at (axis cs:8.25,-5) {\textcolor{blue}{\large $\Set{\Set{1,\dotsc,5}}$}};

\addplot [
color=purple,
solid,
line width=1.5pt,
mark=asterisk,
mark options={solid}
]
table[row sep=crcr]{
6.5 6.5\\
6 5\\
};

\addplot[area legend,solid,fill=purple!20,opacity=4.000000e-01]coordinates {
(6.5,11)
(6.5,-23)
(6,-23)
(6,11)
};

\addplot [
color=green,
solid,
line width=1.5pt,
mark=asterisk,
mark options={solid}
]
table[row sep=crcr]{
6 5\\
4 -3 \\
};

\addplot[area legend,solid,fill=green!20,opacity=4.000000e-01]coordinates {
(6,11)
(6,-23)
(4,-23)
(4,11)
};
\node at (axis cs:5,-6) {\textcolor{green}{\large $\{\Set{1,2},\Set{3}$}};
\node at (axis cs:5,-11) {\textcolor{green}{\large $\Set{4},\Set{5} \}$}};

\addplot [
color=orange,
solid,
line width=1.5pt,
mark=asterisk,
mark options={solid}
]
table[row sep=crcr]{
4 -3\\
0 -23 \\
};

\addplot[area legend,solid,fill=orange!20,opacity=4.000000e-01]coordinates {
(4,11)
(4,-23)
(0,-23)
(0,11)
};
\node at (axis cs:2,-2) {\textcolor{orange}{\large $\Set{\Set{1},\dotsc,\Set{5}}$}};

\end{axis}
\end{tikzpicture}
	\caption{The piecewise linear strictly increasing Dilworth truncation $\FuHat{\alpha}(V_i)$ in $\alpha$ and the segmented partition $\Qat{\alpha}{V_i}$ obtained at the end of each iteration of the PAR Algorithm when it is applied to the system in Example~\ref{ex:main}.}
	\label{fig:PSP}
\end{figure*}

The rest of problem is how to obtain the critical points $\alpha_j$s and $\TUp{j}$s in Lemma~\ref{lemma:PP}. Since $\Qat{\alpha}{V_i}$ for all $\alpha$ determines all partitions in the PSP of $V_i$ (see Section~\ref{subsec:PSPExpand}), we can still use Lemma~\ref{lemma:AlphaAdapt} in Appendix~\ref{app:AlphaAdapt} to adapt the value of $\alpha$ so that all $\TUp{j}$s are determined by the call $\StrMap(\Set{\Set{m} \colon m \in V_i},\Set{V_i})$ in Algorithm~\ref{algo:PP}, and the corresponding $\alpha_j$s can be obtained by another property of the strict strong map property below.

\begin{lemma}[{\cite[Theorem~31]{Fujishige2009PP}}] \label{lemma:CrVals}
    For all $\alpha_j$s and $\TUp{j}$s that characterize $\TU{\alpha}{V_i}$ of the minimal minimizer of $\min\Set{ \FuU{\alpha}(\TX) \colon \Set{i} \in \X \subseteq \Qat{\alpha}{V_i}}$ in Lemma~\ref{lemma:PP},
    $$ r_{\alpha_j} ( \TUp{j-1} \setminus \TUp{j}) = H(\TUp{j-1}) - H(\TUp{j}), \ \forall j \in \Set{1,\dotsc,q}. \hfill\QED$$
\end{lemma}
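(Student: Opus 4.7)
The plan is to exploit the fact that $\alpha_j$ is precisely the transition point at which the minimal minimizer $\TU{\alpha}{V_i}$ switches from $\TUp{j}$ (for $\alpha$ at or just below $\alpha_j$) to $\TUp{j-1}$ (for $\alpha$ just above $\alpha_j$). Because $\FuU{\alpha}(\TX)$ depends on $\alpha$ only through the affine term in $\Fu{\alpha}$ and through the piecewise-linear continuous rate vector $\rv_{\alpha,V_i}$ coming from previous iterations, the map $\alpha \mapsto \FuU{\alpha}(\TX)$ is continuous for every fixed $\TX$. Hence the minimum value in $\min\Set{ \FuU{\alpha}(\TX) \colon \Set{i} \in \X \subseteq \Qat{\alpha}{V_i}}$ is continuous in $\alpha$ on any interval where $\Qat{\alpha}{V_i}$ is fixed, so that at the boundary $\alpha = \alpha_j$ both $\TUp{j}$ and $\TUp{j-1}$ must achieve the common minimum:
\begin{equation*}
    \FuU{\alpha_j}(\TUp{j}) = \FuU{\alpha_j}(\TUp{j-1}).
\end{equation*}

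Starting from this identity, I would substitute the defining expression $\FuU{\alpha}(\TX) = \Fu{\alpha}(\TX) - r_{\alpha}(\TX)$ together with $\Fu{\alpha}(\TX) = \alpha - H(V) + H(\TX)$. The common term $\alpha_j - H(V)$ cancels from both sides, leaving
\begin{equation*}
    H(\TUp{j-1}) - r_{\alpha_j}(\TUp{j-1}) = H(\TUp{j}) - r_{\alpha_j}(\TUp{j}).
\end{equation*}
Rearranging gives $r_{\alpha_j}(\TUp{j-1}) - r_{\alpha_j}(\TUp{j}) = H(\TUp{j-1}) - H(\TUp{j})$, and since $\TUp{j} \subsetneq \TUp{j-1}$ the left-hand side equals $r_{\alpha_j}(\TUp{j-1} \setminus \TUp{j})$ by the modular additivity of the sum-rate function. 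This is precisely the claimed identity.

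The main obstacle is rigorously justifying the simultaneous-minimizer property at the critical value $\alpha_j$, because $\Qat{\alpha}{V_i}$ itself may also jump at some values of $\alpha$ that need not coincide with $\alpha_j$. I would handle this by combining the strict strong map $\FuU{\alpha} \twoheadrightarrow \FuU{\alpha'}$ of Theorem~\ref{theo:StrongMap} with Lemma~\ref{lemma:PP}, which guarantees $\TU{\alpha}{V_i} = \TUp{j}$ on $(\alpha_{j+1}, \alpha_j]$ and $\TU{\alpha}{V_i} = \TUp{j-1}$ on $(\alpha_j, \alpha_{j-1}]$. Taking the limit $\alpha \downarrow \alpha_j^+$ on the minimality inequality $\FuU{\alpha}(\TUp{j-1}) \leq \FuU{\alpha}(\TUp{j})$ and using continuity in $\alpha$ yields $\FuU{\alpha_j}(\TUp{j-1}) \leq \FuU{\alpha_j}(\TUp{j})$, while the reverse inequality holds by the minimality of $\TUp{j}$ at $\alpha_j$. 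This is essentially the content of \cite[Theorem~31]{Fujishige2009PP} specialized to the present setting; once it is in place, the remainder of the proof is the short algebraic manipulation outlined above.
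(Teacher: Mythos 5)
The paper itself does not prove this lemma---it is cited directly from \cite[Theorem~31]{Fujishige2009PP} (hence the box appended to the statement with no proof environment following). Your derivation is a correct, self-contained unpacking of that cited fact, and the algebra after you establish $\FuU{\alpha_j}(\TUp{j}) = \FuU{\alpha_j}(\TUp{j-1})$ is exactly right: the affine term $\alpha_j - H(V)$ cancels, and modularity of $r_{\alpha_j}$ applied to $\TUp{j} \subsetneq \TUp{j-1}$ gives the claimed identity. Two small points deserve more care than your sketch gives them. First, continuity of $\alpha \mapsto r_\alpha(\TX)$ at iteration $i$ rests on continuity of the coordinates $r_{\alpha,m}$ for $m < i$, which in turn follows from this same simultaneous-minimizer fact applied at the earlier iterations; so the argument is properly an induction on $i$, with the trivial base case that $r_{\alpha,1} = \alpha - H(V) + H(\{1\})$ is affine. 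Second, for $\alpha$ just above $\alpha_j$ the partition $\Qat{\alpha}{V_i}$ is coarser than $\Qat{\alpha_j}{V_i}$, so $\TUp{j}$ need not be a fusion of $\Qat{\alpha}{V_i}$ there; the inequality $\FuU{\alpha}(\TUp{j-1}) \leq \FuU{\alpha}(\TUp{j})$ should be read over the equivalent unconstrained problem $\min\{\FuU{\alpha}(\TX) : i \in \TX \subseteq V_i\}$ implicit in the min-max identity \eqref{eq:Fusion}--\eqref{eq:MinMax}. Conversely, $\Qat{\alpha_j}{V_i}$ is finer than $\Qat{\alpha}{V_i}$ for $\alpha > \alpha_j$, so $\TUp{j-1}$ \emph{is} a valid fusion at $\alpha_j$, and the reverse inequality is a genuine minimality inequality. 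With those two clarifications your proof stands.
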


\begin{example} \label{ex:StrMap}
    We call $\StrMap(\Set{\Set{1},\dotsc,\Set{5}}, \Set{\Set{1,\dotsc,5}})$ to determine all $\TUp{j}$s for $\min\Set{ \FuU{\alpha}(\TX) \colon \Set{5} \in \X \subseteq \Qat{\alpha}{V} }$ at the iteration $i =5$ in Example~\ref{ex:ParAlgo}. Here, $\Qat{\alpha}{V} = \Qat{\alpha}{V_4} \sqcup \Set{\Set{5}}$ where $\Qat{\alpha}{V_4}$ is in \eqref{eq:UpV4}. In this call, we have $\alpha = 5.75$ and $\SU = \bigcap \argmin \Set{ \FuU{5.75}(\TX) \colon \Set{5} \in \X \subseteq \Qat{5.75}{V}} = \Set{\Set{1,2},\Set{5}}$ so that $\Pat = (\Qat{5.75}{V} \setminus \SU) \sqcup \Set{ \TSU } = \Set{\Set{1,2,5},\Set{3},\Set{4}}$. Since $\Pat \neq \Set{\Set{1},\dotsc,\Set{5}}$, recursion continues.

    In the call $\StrMap(\Set{\Set{1},\dotsc,\Set{5}}, \Set{\Set{1,2,5},\Set{3},\Set{4}})$, we have $\alpha = 5$, $\SU = \Set{\Set{5}}$ and $\Pat = \Set{\Set{1,2},\Set{3},\Set{4},\Set{5}}$. We call $\StrMap(\Set{\Set{1},\dotsc,\Set{5}}, \Set{\Set{1,2},\Set{3},\Set{4},\Set{5}})$ and $\StrMap(\Set{\Set{1,2},\Set{3},\Set{4},\Set{5}}, \Set{\Set{1,2,5},\Set{3},\Set{4}})$ and get $\Set{5}$ returned for both calls.
    In the call $\StrMap(\Set{\Set{1,2,5},\Set{3},\Set{4}}, \Set{\Set{1,\dotsc,5}})$, we have $\alpha = 6.5$, $\SU = \Set{\Set{1,2},\Set{5}}$ and $\Pat = \Set{\Set{1,2,5},\Set{3},\Set{4}}$ so that it terminates with $\Set{1,2,5}$ returned.

    Finally we have $\Set{5}$ and $\Set{1,2,5}$ as $\TUp{2}$ and $\TUp{1}$, respectively, and know that $\TUp{0} = V = \Set{1,\dotsc,5}$. We apply Lemma~\ref{lemma:CrVals} to determine that $\alpha_2 = 6$, $\alpha_1 = 6.5$ and $\alpha_0 = H(V) = 10$.
\end{example}

      \begin{algorithm} [t]
	       \label{algo:PP}
	       \small
	       \SetAlgoLined
	       \SetKwInOut{Input}{input}\SetKwInOut{Output}{output}
	       \SetKwFor{For}{for}{do}{endfor}
            \SetKwRepeat{Repeat}{repeat}{until}
            \SetKwIF{If}{ElseIf}{Else}{if}{then}{else if}{else}{endif}
	       \BlankLine
           \Input{$\Pat_d,\Pat_u \in \Pi(V_i)$ such that $\Pat_d \prec \Pat_u$ (We assume the $\Qat{\alpha}{V_i}$ and $\FuU{\alpha}$ for all $\alpha$ are the global variables.)}
	       \Output{$\Set{\TUp{j} \colon j \in \Set{0,\dotsc,q}}$ for the problem $\min \Set{ \FuU{\alpha}(\TX) \colon \Set{i} \in \X \subseteq \Qat{\alpha}{V_i}}$}
	       \BlankLine
            $\alpha \leftarrow H(V) - \frac{ H[\Pat_d] - H[\Pat_u] }{ |\Pat_d| - |\Pat_u| }$\;
            $\SU \leftarrow \bigcap \argmin \Set{ \FuU{\alpha}(\TX) \colon \Set{i} \in \X \subseteq \Qat{\alpha}{V_i}} $\;
            $\Pat \leftarrow (\Qat{\alpha}{V_i} \setminus \SU) \sqcup \Set{ \TSU }$\;
            \lIf{$\Pat = \Pat_d$}{return $\Set{\TSU}$}
            \lElse{ return $\StrMap(\Pat_d,\Pat) \cup \StrMap(\Pat,\Pat_u)$ }
	   \caption{Strong Map (StrMap) Algorithm}
	   \end{algorithm}

\subsection{Complexity}
\label{sec:Complexity}

Due to the strong map property in Theorem~\ref{theo:StrongMap}, the $\StrMap$ algorithm can be completed by a parametric submodular function minimization (PSFM) algorithm, e.g., \cite{Fleischer2003PSFM,Nagano2007PSFM,IwataPSFM1997}, that runs in the same asymptotic time as a single call of a SFM algorithm. See Appendix~\ref{app:PSFM}.
Therefore, the minimum sum-rate problem can be solved by the PAR algorithm in $O(|V| \cdot \SFM(|V|))$ time. As compared to $O(|V|^2 \cdot \SFM(|V|))$ time of the MDA algorithm in \cite{Ding2018IT}, the complexity is reduced by a factor of $|V|$.

\subsubsection{CO in Expanding Ground Set}
\label{subsec:PSPExpand}

An observation on Fig.~\ref{fig:PSP}, the minimum sum-rate problem can be solved when the size of the ground set $V_i$ is gradually increasing in the order of $i = 1,2,\dotsc,|V|$.\footnote{This is particularly useful when the users complete recording their observations in different times. For example $i = 1$ is assigned to the user that completes observations first.}
In addition, by replacing the horizontal axis by $\alpha \leftarrow \alpha - H(V) + H(V_i)$ in Figs.~\ref{fig:PSP}(a)-(e), we have them being exactly the PSP of $V_i$ that provides the solution to the minimum sum-rate problem in $V_i$. For example, in Fig.~\ref{fig:PSP}(b), by letting $\alpha \leftarrow \alpha - H(V) + H(V_2) = \alpha - 2$, we have $\Qat{\alpha}{V_2}$ determining the PSP of $V_2$ so that the first critical point $\alphap{1} = 2 = \RCO(V_2)$. See Fig.~\ref{fig:PSPExp}.

\subsubsection{Distributed Implementation}
\label{subsec:Distr}

The PAR algorithm can be implemented in a distributed manner: for all $i = 1,\dotsc,|V|$, let user $i$ obtain $\Qat{\alpha}{V_i}$ and $\rv_{\alpha,V_i}$ for all $\alpha$ and pass them to user $i+1$.

\begin{figure}[t]
	\centering
    \scalebox{0.6}{
%
%
\definecolor{mycolor1}{rgb}{1,0,1}%

\begin{tikzpicture}[
every pin/.style={fill=yellow!50!white,rectangle,rounded corners=3pt,font=\tiny},
every pin edge/.style={<-}]

\begin{axis}[%
width=3.8in,
height=0.7in,
scale only axis,
xmin=0,
xmax=8,
xlabel={\Large $\alpha$},
xmajorgrids,
ymin=-4,
ymax=10,
ylabel={\large $\FuHat{\alpha}(V_2)$},
ymajorgrids,
legend style={at={(0.65,0.05)},anchor=south west,draw=black,fill=white,legend cell align=left}
]

\addplot [
color=blue,
solid,
line width=1.5pt,
mark=asterisk,
mark options={solid}
]
table[row sep=crcr]{
8 8\\
2 2\\
};

\addplot[area legend,solid,fill=blue!20,opacity=4.000000e-01]coordinates {
(8,10)
(8,-8)
(2,-8)
(2,10)
};
\node at (axis cs:5,0) {\textcolor{blue}{\large $\Set{\Set{1,2}}$}};

\addplot [
color=orange,
solid,
line width=1.5pt,
mark=asterisk,
mark options={solid}
]
table[row sep=crcr]{
2 2\\
0 -2 \\
};

\addplot[area legend,solid,fill=orange!20,opacity=4.000000e-01]coordinates {
(2,10)
(2,-8)
(0,-8)
(0,10)
};
\node at (axis cs:1,5) {\textcolor{orange}{\large $\Set{\Set{1},\Set{2}}$}};

\end{axis}
\end{tikzpicture}
	\caption{The $\FuHat{\alpha}(V_2)$ vs, $\alpha$ plot in Fig.~\ref{fig:PSP}(b) by a shift $\alpha \leftarrow \alpha - H(V) + H(V_2)$. By doing so, the plot characterizes the PSP of $V_2$ with critical points $\alphap{1} = 2$ and $\alphap{0} = H(V_2) = 8$. Here, $\alphap{1} = \RCO(V_2)$ is the minimum sum-rate for attaining the omniscience in $V_2$ with an optimal rate vector $\rv_{2,\Set{1,2}} = (2,0)$.}
	\label{fig:PSPExp}
\end{figure}
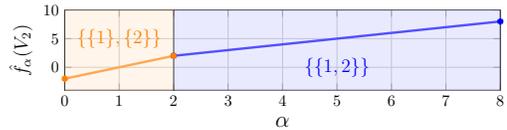

\section{Successive Omniscience}
\label{sec:SO}

Successive omniscience (SO) is proposed in \cite{ChanSuccessiveIT,Ding2015NetCod}. The idea is based on the fact that there is a particular group of nonsingleton subsets $\XComp$s of $V$ such that the local omniscience in $\XComp$ can be attained with the minimum sum-rate $\RCO(\XComp)$ before the global omniscience in $V$ so that the overall sum-rate for the CO problem still remains the minimum $\RCO(V)$. We call $\XComp$ a \textit{complimentary subset} (for SO) \cite{ChanSuccessiveIT}. The existence of a complimentary subset suggests that omniscience can be attained in a successive manner: first select and attain local omniscience in $\XComp$; then solve the global omniscience problem in $V$. An example in \cite[Section~IV]{Ding2017ISIT} shows that recursively applying this procedure leads to a multi-stage SO towards the global omniscience.

Obviously, the main task in SO is to select a subset $\XComp$ that is complimentary. It is shown in \cite{ChanSuccessiveIT} that a non-singleton $\XComp \subsetneq V$ is complimentary if and only if $H(V) - H(\XComp) + \RCO(\XComp) \leq \RCO(V)$. The interpretation is that: after local omniscience in $\XComp$ is attained by minimum sum-rate $\RCO(\XComp)$, the other users in $V \setminus \XComp$ should transmit at least the missing randomness $H(V) - H(\XComp)$ to attain global omniscience; the overall rate $H(V) - H(\XComp) + \RCO(\XComp)$ must be no greater than $\RCO(V)$ for $\XComp$ to be complimentary. This necessary and sufficient condition is shown to be equivalent to $\Fu{\RCO(V)} (\XComp) = \FuHat{\RCO(V)} (\XComp)$ \cite[Corollary~III.3]{Ding2017ISIT}.
The authors in \cite{Ding2017ISIT} further relaxed it to a sufficient condition based on a lower bound on $\RCO(V)$ as follows.

\begin{proposition}[{\cite[Lemma~III.7]{Ding2017ISIT}}] \label{prop:SO}
    A nonsingleton $\XComp \subsetneq V$ is complimentary if $\Fu{\alphaU}(\XComp) = \FuHat{\alphaU}(\XComp)$ for $\alphaU \leq \RCO(V)$. \hfill \QED
\end{proposition}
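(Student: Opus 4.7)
My plan is to reduce the proposition to the equivalent necessary-and-sufficient characterization $\Fu{\RCO(V)}(\XComp) = \FuHat{\RCO(V)}(\XComp)$ recalled from \cite[Corollary~III.3]{Ding2017ISIT}, and then prove a monotonicity statement: once the equality $\Fu{\alpha}(\XComp) = \FuHat{\alpha}(\XComp)$ holds at some parameter value $\alpha = \alphaU$, it continues to hold at every $\alpha \geq \alphaU$. Applying this to $\alphaU \leq \RCO(V)$ immediately yields the complimentarity of $\XComp$.

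The key step is the monotonicity. I would first observe that, since $\FuHat{\alpha}(\XComp) = \min_{\Pat \in \Pi(\XComp)} \Fu{\alpha}[\Pat]$ and the trivial partition $\Set{\XComp}$ yields $\Fu{\alpha}[\Set{\XComp}] = \Fu{\alpha}(\XComp)$, the equality $\Fu{\alpha}(\XComp) = \FuHat{\alpha}(\XComp)$ is equivalent to $\Set{\XComp}$ being a minimizer of $\Fu{\alpha}[\Pat]$ over $\Pi(\XComp)$. A routine expansion using $\Fu{\alpha}(C) = \alpha - H(V) + H(C)$ then gives, for every partition $\Pat \in \Pi(\XComp)$,
\begin{equation}
    \Fu{\alpha}[\Pat] - \Fu{\alpha}(\XComp) = (|\Pat| - 1)\bigl(\alpha - H(V)\bigr) + \sum_{C \in \Pat} H(C) - H(\XComp), \nonumber
\end{equation}
which is affine in $\alpha$ with slope $|\Pat| - 1$. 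For nontrivial $\Pat$ this slope is strictly positive ($\geq 1$), so the gap, if nonnegative at $\alphaU$, remains nonnegative at every $\alpha \geq \alphaU$, uniformly in $\Pat$. Thus $\Set{\XComp}$ remains a minimizer for all $\alpha \geq \alphaU$.

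Specializing to $\alpha = \RCO(V) \geq \alphaU$ then gives $\Fu{\RCO(V)}(\XComp) = \FuHat{\RCO(V)}(\XComp)$, and the equivalence recalled above concludes that $\XComp$ is complimentary. I do not expect a genuine obstacle here: the only conceptual point is the reinterpretation of $\FuHat{\alpha}(\XComp)$ as the lower envelope, over $\Pat \in \Pi(\XComp)$, of affine functions in $\alpha$ with slopes $|\Pat| - 1$; because the trivial one-block partition $\Set{\XComp}$ has the smallest such slope ($0$), once it attains the envelope it remains on it for every larger parameter value, which is precisely the monotonicity needed.
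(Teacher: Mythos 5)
The paper does not prove this proposition: it is stated as a citation of \cite[Lemma~III.7]{Ding2017ISIT}, with the \QED signaling that the proof is omitted. So there is no ``paper's own proof'' to compare against. Taking the proposal on its own merits: it is correct, self-contained modulo the iff-characterization ($\XComp$ complimentary $\Leftrightarrow \Fu{\RCO(V)}(\XComp) = \FuHat{\RCO(V)}(\XComp)$) which the paper explicitly records just before the statement, and it cleanly identifies the right mechanism. Your affine-gap computation is accurate: for any $\Pat \in \Pi(\XComp)$ one has $\Fu{\alpha}[\Pat] - \Fu{\alpha}(\XComp) = (|\Pat|-1)(\alpha - H(V)) + \sum_{C\in\Pat}H(C) - H(\XComp)$, an affine function of $\alpha$ with slope $|\Pat|-1 \geq 0$ that vanishes exactly on the trivial partition; since $\Fu{\alphaU}(\XComp)=\FuHat{\alphaU}(\XComp)$ makes every such gap nonnegative at $\alphaU$, and each gap is nondecreasing in $\alpha$, the equality $\Fu{\alpha}(\XComp)=\FuHat{\alpha}(\XComp)$ propagates to all $\alpha \geq \alphaU$, in particular to $\alpha = \RCO(V)$. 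This is in the same spirit as the PSP/Dilworth monotonicity the paper appeals to elsewhere (the ``lower envelope of affine functions whose slopes are $|\Pat|-1$'' picture), but you give it as a direct two-line verification rather than invoking PSP machinery, which is arguably the more transparent route for this isolated lemma.
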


However, to implement SO, we still need to know an optimal rate vector for attaining the local omniscience in $\XComp$ with sum-rate $\RCO(\XComp)$, which is not addressed in \cite{Ding2018IT} or the existing literature. One may think of a two step approach: choose a $\XComp$; solve the minimum sum-rate problem in $\XComp$ by the existing CO techniques, e.g., the MDA algorithm in \cite{Ding2018IT}.
But, it is worth discussing if there exists a less complex approach: can an optimal rate vector for the local omniscience in $\XComp$ be obtained at the same time when $\XComp$ is selected? The following theorem shows that we can do so by utilizing the results in the PAR algorithm. The proof is in Appendix~\ref{app:theo:SO}.

\begin{theorem} \label{theo:SO}
    Let $\Qat{\alpha}{V_i}$ and $\rv_{\alpha,V_i}$ be segmented partition and rate vector, respectively, obtained at the end of iteration $i \in \Set{2,\dotsc,|V|}$ of the PAR algorithm. For $\alphaU = \sum_{i \in V} \frac{H(V) - H(\Set{i})}{|V| - 1}$, all $C \in \Qat{\alphaU}{V_i}$ such that $|C| > 1$ satisfy $\Fu{\alphaU} (C) = \FuHat{\alphaU} (C)$, i.e., they are complimentary subsets. Also, let $\alphaComp$ be the minimum value of $\alpha$ such that $\Fu{\alpha} (C) = \FuHat{\alpha} (C)$. $\rv_{\alphaComp,C}$ is an optimal rate vector that attains local omniscience in $C$ with the minimum sum-rate $\RCO(C)$. \hfill \QED
\end{theorem}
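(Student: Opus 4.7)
The plan is to establish the two claims of the theorem in sequence.

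For the complimentarity statement, I would verify both hypotheses of Proposition~\ref{prop:SO}. First, $\alphaU \leq \RCO(V)$ is immediate from~\eqref{eq:MinSumRatePat}, since evaluating its right-hand side on the singleton partition $\Set{\Set{j} \colon j \in V}$ returns exactly $\alphaU$. Second, the footnote identities inside the proof of Theorem~\ref{theo:StrongMap} give $\FuHat{\alpha}(C') = \Fu{\alpha}(C')$ for every $C' \in \Qat{\alpha}{V_i}$, so in particular $\FuHat{\alphaU}(C) = \Fu{\alphaU}(C)$ for each $C \in \Qat{\alphaU}{V_i}$. Proposition~\ref{prop:SO} then immediately certifies that every such $C$ with $|C| > 1$ is complimentary.

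Next, I would identify $\alphaComp$ through the shift $\beta = \alpha - H(V) + H(C)$. Defining $\Fu{\beta}^{C}(X) = \beta - H(C) + H(X)$ for $X \subseteq C$, a direct check gives $\Fu{\alpha}[\Pat] = \Fu{\beta}^{C}[\Pat]$ for every $\Pat \in \Pi(C)$, reducing $\min_{\Pat \in \Pi(C)} \Fu{\alpha}[\Pat]$ to the Dilworth truncation problem on the ground set $C$. The trivial partition $\Set{C}$ becomes optimal for the reduced problem precisely when $\beta$ reaches the first PSP critical point of $C$, which equals $\RCO(C)$. Hence $\alphaComp = \RCO(C) + H(V) - H(C)$ and $\Fu{\alphaComp}(C) = \RCO(C)$.

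The technical heart of the proof is $r_{\alphaComp}(C) = \RCO(C)$. Since $\Fu{\alphaU}(C) = \FuHat{\alphaU}(C)$, minimality of $\alphaComp$ yields $\alphaComp \leq \alphaU$. Because $\Qat{\alpha}{V_i}$ refines monotonically as $\alpha$ decreases (a PSP property from Section~\ref{subsec:PSP}) and $C$ is a block at $\alphaU$, the set $C$ must remain a disjoint union of blocks of $\Qat{\alphaComp}{V_i}$; that is, $C = \TX$ for some $\X \subseteq \Qat{\alphaComp}{V_i}$. Chaining the two footnote identities then gives $r_{\alphaComp}(C) = \Fu{\alphaComp}[\X] = \FuHat{\alphaComp}(C) = \Fu{\alphaComp}(C) = \RCO(C)$.

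Achievability of $\rv_{\alphaComp,C}$ inside the CO rate region of ground set $C$ is then a one-line subtraction: for any $X \subsetneq C$, $\rv_{\alphaComp,V_i} \in P(\Fu{\alphaComp},\leq)$ yields $r_{\alphaComp}(C \setminus X) \leq \Fu{\alphaComp}(C \setminus X)$, and subtracting from $r_{\alphaComp}(C) = \Fu{\alphaComp}(C)$ produces $r_{\alphaComp}(X) \geq H(C) - H(C \setminus X) = H(X \mid C \setminus X)$. The main obstacle I anticipate is the refinement step: cleanly showing that $C$ still decomposes as $\TX$ for some $\X \subseteq \Qat{\alphaComp}{V_i}$, which uses the PSP-induced monotone refinement of $\Qat{\alpha}{V_i}$ as $\alpha$ decreases from $\alphaU$ to $\alphaComp$, together with $\alphaComp \leq \alphaU$.
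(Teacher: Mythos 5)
Your proposal is correct and traverses the same overall route as the paper: bound $\alphaU \leq \RCO(V)$ from~\eqref{eq:MinSumRatePat}, invoke Proposition~\ref{prop:SO}, identify $\alphaComp = H(V)-H(C)+\RCO(C)$, and verify $\rv_{\alphaComp,C} \in \RRCO(C)$ by rewriting the polyhedral upper bounds $r_{\alphaComp}(X) \leq \Fu{\alphaComp}(X)$ together with $r_{\alphaComp}(C)=\Fu{\alphaComp}(C)$ as the Slepian--Wolf lower bounds. Two local differences are worth noting. For the complimentarity of the blocks of $\Qat{\alphaU}{V_i}$, the paper argues by contradiction (a strictly cheaper refinement of $C$ would contradict minimality of $\Qat{\alphaU}{V_i}$ in Proposition~\ref{prop:preamble}), whereas you invoke the footnote identity $\FuHat{\alpha}(C) = \Fu{\alpha}(C)$ for $C \in \Qat{\alpha}{V_i}$ directly; both are valid and yours is a little more economical. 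More substantively, the paper simply \emph{asserts} $\rv_{\alphaComp,C} \in B(\FuHat{\alphaComp}^C,\leq)$, while you supply the missing justification: $\alphaComp \leq \alphaU$ implies $\Qat{\alphaComp}{V_i} \preceq \Qat{\alphaU}{V_i}$ by the PSP monotonicity on $V_i$ (via Proposition~\ref{prop:preamble}), so $C$ decomposes into blocks of $\Qat{\alphaComp}{V_i}$, and chaining the footnote identities yields $r_{\alphaComp}(C) = \FuHat{\alphaComp}(C) = \Fu{\alphaComp}(C)$. That explicit argument is precisely the step the paper leaves implicit, and it is the right way to close the gap.
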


Theorem~\ref{theo:SO} suggests that a complimentary subset for SO, if there exists any, can be searched at the end of each iteration of the PAR algorithm. Therefore, we have the SO algorithm in Algorithm~\ref{algo:SO} that completes in $O(|V| \cdot \SFM(|V|))$ time.
In addition, $\alphaComp \leq \alphaU$ always and $\alphaComp$ is not hard to determine from $\Qat{\alpha}{V_i}$: It is the critical value or turning point where all subsets of $C$ merges to $C$. See the example below.

        \begin{algorithm} [t]
	       \label{algo:SO}
	       \small
	       \SetAlgoLined
	       \SetKwInOut{Input}{input}\SetKwInOut{Output}{output}
	       \SetKwFor{For}{for}{do}{endfor}
            \SetKwRepeat{Repeat}{repeat}{until}
            \SetKwIF{If}{ElseIf}{Else}{if}{then}{else if}{else}{endif}
	       \BlankLine
           \Input{$V$ and $H$}
	       \Output{a complimentary subset $C$ and an optimal rate vector $\rv_{\alphaComp,C}$ for attaining the local omniscience in $C$}
	       \BlankLine
           $\alphaU \leftarrow \sum_{i \in V} \frac{H(V) - H(\Set{i})}{|V| - 1}$\;
           \For{$i=2$ \emph{\KwTo} $|V|$}{
                Let $\Qat{\alpha}{V_i}$ and $\rv_{\alpha,V_i}$ be the segmented variables obtained at the end of the $i$th iteration in the PAR algorithm\;
                \lIf{$\exists C \in \Qat{\alphaU}{V_i} \colon |C| > 1$}{
                    return $C$ and $\rv_{\alphaComp,C}$ where $\alphaComp = \min \Set{ \alpha \colon \Fu{\alpha} (C) = \FuHat{\alpha} (C) }$
                }
            }
	   \caption{Successive Omniscience (SO) Algorithm}
	   \end{algorithm}

\begin{example} \label{ex:SO}
    We apply Theorem~\ref{theo:SO} to the system in Example~\ref{ex:main}. We set $\alphaU = \sum_{i \in V} \frac{H(V) - H(\Set{i})}{|V| - 1} = 5.75$. For $i = 2$, we obtain the $\Qat{\alpha}{V_2}$ and $\rv_{\alpha,V_2}$ for all $\alpha$ as in \eqref{eq:ExQatR} at the end of the iteration $i = 2$ in the PAR algorithm so that $\Qat{\alphaU}{V_2} = \Set{\Set{1,2}}$. The only nonsingleton $\Set{1,2}$ is a complimentary subset. We then search region $[0, \alphaU]$ and find that $\Set{1}$ and $\Set{2}$ merge to $\Set{1,2}$ at $\alphaComp = 4$. See Fig.~\ref{fig:SO2}. Note, this meas $\alphaComp = \min \Set{ \alpha \colon \Fu{\alpha} (\Set{1,2}) = \FuHat{\alpha} (\Set{1,2}) } = 4$, where we have $\rv_{4,V_2} = (2,0)$ being an optimal rate vector that attains local omniscience in $\Set{1,2}$ with $\RCO(\Set{1,2}) = r_{4}(\Set{1,2}) = 2$. For the optimal rate vector $(4.5,0,0.5,0.5,1)$ for the global omniscience obtained in Example~\ref{ex:ParAlgo}, we have $(4.5,0,0.5,0.5,1) = (2,0,0,0,0) + (2.5,0,0.5,0.5,1)$, which means that, by letting users transmit at rate $(2.5,0,0.5,0.5,1)$ after the local omniscience in $\Set{1,2}$, the global omniscience is still attained with the minimum sum-rate $\RCO(V)$.

    Also note that Theorem~\ref{theo:SO} in fact applies to any lower bound $\alphaU \leq \RCO(V)$. For example, for $\alphaU = 6.25$, consider $\Qat{\alpha}{V}$ obtained by the PAR algorithm for $i = 5$ in Fig.~\ref{fig:PSP}(e). We have $\Set{1,2,5} \in \Qat{6.25}{V}$ being another complimentary subset with $\alphaComp = \min \Set{ \alpha \colon \Fu{\alpha} (\Set{1,2,5}) = \FuHat{\alpha} (\Set{1,2,5}) } = 6$ and $\rv_{6,\Set{1,2,5}} = (4,0,1)$ being an optimal rate vector for attaining the local omniscience in $\Set{1,2,5}$.
\end{example}

\begin{figure}[t]
	\centering
    \scalebox{0.6}{
%
%
\definecolor{mycolor1}{rgb}{1,0,1}%

\begin{tikzpicture}[
every pin/.style={rectangle,rounded corners=3pt,font=\tiny},
every pin edge/.style={<-}]

\begin{axis}[%
width=3.8in,
height=0.7in,
scale only axis,
xmin=0,
xmax=10,
xlabel={\Large $\alpha$},
xmajorgrids,
ymin=-8,
ymax=10,
ylabel={\large $\FuHat{\alpha}(V_2)$},
ymajorgrids,
legend style={at={(0.65,0.05)},anchor=south west,draw=black,fill=white,legend cell align=left}
]

\addplot [
color=blue,
solid,
line width=1.5pt,
mark=asterisk,
mark options={solid}
]
table[row sep=crcr]{
10 8\\
4 2\\
};

\addplot[area legend,solid,fill=blue!20,opacity=4.000000e-01]coordinates {
(10,10)
(10,-8)
(4,-8)
(4,10)
};
\node at (axis cs:7,0) {\textcolor{blue}{\large $\Set{\Set{1,2}}$}};

\addplot [
color=orange,
solid,
line width=1.5pt,
mark=asterisk,
mark options={solid}
]
table[row sep=crcr]{
4 2\\
0 -6 \\
};

\addplot[area legend,solid,fill=orange!20,opacity=4.000000e-01]coordinates {
(4,10)
(4,-8)
(0,-8)
(0,10)
};
\node at (axis cs:2,5) {\textcolor{orange}{\large $\Set{\Set{1},\Set{2}}$}};

\addplot [
color=red,
dashed,
line width=2pt,
]
table[row sep=crcr]{
5.75 10\\
5.75 -8\\
};
\node at (axis cs:5.75,-4.5) [pin={[pin distance = 5mm,pin edge = {red}]0:\textcolor{red}{\Large $\alphaU$}}] {};

\addplot[color=orange,dotted,line width=3pt] coordinates {
(4,10)
(4,-8)
};
\node at (axis cs:4,-4.5) [pin={[pin distance = 5mm,pin edge = {orange}]180:\textcolor{orange}{\Large $\alphaComp$}}] {};

\end{axis}
\end{tikzpicture}
	\caption{For the the system in Example~\ref{ex:main}, when Theorem~\ref{theo:SO} is applied to $\Qat{\alpha}{V_2}$ in Fig.~\ref{fig:PSP}(b), we have $\alphaU = 5.75$ and $\Set{1,2} \in \Qat{5.75}{V_2}$ is a compliment subset for the SO. We obtain $\alphaComp = 4$ where $\Set{1}$ and $\Set{2}$ are merged to form a whole subset $\Set{1,2}$. For the $\rv_{\alpha,V_2}$ in \eqref{eq:ExQatR}, $\rv_{4,V_2} = (2,0)$ is an optimal rate vector for the local omniscience in $\Set{1,2}$. }
	\label{fig:SO2}
\end{figure}

\section{Conclusion}

We reduced the complexity of solving the minimum sum-rate problem in CO from $O(|V|^2 \cdot \SFM(|V|))$ to $O(|V| \cdot \SFM(|V|))$ by proposing a PAR algorithm. We proved the strict strong map property, which ensures that the partitions and the rate vectors obtained in the existing $\CoordSatCapFus$ algorithm are segmented in $\alpha$, the estimation of the minimum sum-rate. We proposed the PAR algorithm where the critical points for the segmented variables can be searched by the PSFM algorithm so that the overall complexity reduces to $O(|V| \cdot \SFM(|V|))$. We discussed how to apply the PAR algorithm to a growing ground set in a distributed manner.
We also showed how to determine a complimentary user subset for SO and an optimal rate vector for attaining the local omniscience in it in $O(|V| \cdot \SFM(|V|))$ time.

There is a potential that the PAR algorithm can be adapted so that the PSP is obtained in a growing ground set $V_i$ in a fully distributed manner. Note, in the distributed implementation in Section~\ref{subsec:Distr}, the users still need to know the overall information $H(V)$ of the multiple source. Also, there should be more discussion on how to apply the PAR algorithm in a multi-stage SO.

\appendices

\section{} 
\label{app:AlphaAdapt}

Let $H[\Pat] = \sum_{C \in \Pat} H(C)$ for all $\Pat \in \Pi(V)$. The MDA algorithm proposed in \cite{Ding2018IT} starts with $\Pat = \Set{\Set{i} \colon i \in V}$ and run the recursion $(\rv_V,\Pat) \coloneqq \CoordSatCapFus(\alpha,V,H)$ where $\alpha = H(V) - \frac{ H[\Pat] - H(V) }{ |\Pat| - 1 }$ until $\alpha$ converges to $\RACO(V)$. The validity of the MDA algorithm is based on the properties of the PSP below. Note, Lemma~\ref{lemma:AlphaAdapt} also ensures the validity of $\StrMap$ algorithm in Algorithm~\ref{algo:PP}.

\begin{lemma}[{\cite[Sections~2.2 and 3]{MinAveCost}\cite[Definition~3.8]{Narayanan1991PLP}}] \label{lemma:AlphaAdapt}
    The $\alphap{j}$s and $\Patp{j}$s in the PSP of the ground set $V$ satisfy the followings.
    \begin{itemize}
      \item For all $j$, $\alphap{j} = H(V) - \frac{ H[\Patp{j}] - H[\Patp{j-1}] }{ |\Patp{j}| - |\Patp{j-1}| }$;
      \item For all $j,j'$ such that $j+1 < j'$, let $\alpha = H(V) - \frac{ H[\Patp{j'}] - H[\Patp{j}] }{ |\Patp{j'}| - |\Patp{j}| }$. Then, $\alphap{j} < \alpha \leq \alphap{j'}$
    \end{itemize}
\end{lemma}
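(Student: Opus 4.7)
The plan is to view the Dilworth truncation $\FuHat{\alpha}(V) = \min_{\Pat \in \Pi(V)} \Fu{\alpha}[\Pat]$ as the pointwise minimum of an affine family parameterized by partitions. Each partition contributes the affine map $\alpha \mapsto |\Pat|(\alpha - H(V)) + H[\Pat]$, which has slope $|\Pat|$ and intercept determined by $H[\Pat]$. Consequently $\FuHat{\alpha}(V)$ is concave and piecewise linear in $\alpha$, with breakpoints exactly at the PSP critical values $\alphap{j}$ and with the active linear piece on the interval $(\alphap{j+1}, \alphap{j}]$ being supported by $\Patp{j}$, as described in Section~\ref{subsec:PSP}.

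For the first bullet, I would use that both $\Patp{j-1}$ and $\Patp{j}$ attain the minimum at the breakpoint $\alphap{j}$, since this is where the active line of the lower envelope changes. Setting the two affine expressions equal,
\begin{equation*}
|\Patp{j-1}|(\alphap{j} - H(V)) + H[\Patp{j-1}] = |\Patp{j}|(\alphap{j} - H(V)) + H[\Patp{j}],
\end{equation*}
and solving for $\alphap{j}$ delivers the ratio formula directly, with no further work.

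For the second bullet, the same algebra identifies the quantity $\alpha$ as the unique intersection point of the two affine pieces supported by $\Patp{j}$ and $\Patp{j'}$ (the slopes differ since $|\Patp{j}| \neq |\Patp{j'}|$), so the task reduces to locating this intersection relative to the two relevant breakpoints. The key observation is that the hypothesis $j + 1 < j'$ forces $\Patp{j'}$ to be distinct from the pair $\Patp{j-1}, \Patp{j}$ that minimizes the envelope at $\alphap{j}$, and symmetrically forces $\Patp{j}$ to be distinct from the pair $\Patp{j'-1}, \Patp{j'}$ minimizing at $\alphap{j'}$. Consequently the line supported by $\Patp{j'}$ sits strictly above the line supported by $\Patp{j}$ at $\alphap{j}$, and vice versa at $\alphap{j'}$. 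Combining these strict gap inequalities with the slope comparison — the steeper of the two lines exceeds the flatter one on exactly one side of their crossing — I would pin the intersection strictly between $\alphap{j}$ and $\alphap{j'}$, delivering the sandwich bound claimed in the lemma.

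The main obstacle is establishing the uniqueness of minimizers at each breakpoint: I need that $\Patp{j-1}$ and $\Patp{j}$ are the \emph{only} partitions attaining $\FuHat{\alphap{j}}(V)$, so that any non-adjacent $\Patp{j'}$ gives a strictly larger envelope value. This is a structural fact about the PSP which I would import from its construction in \cite{MinAveCost,Narayanan1991PLP}, rather than rederive, since the lemma is cited directly from those sources.
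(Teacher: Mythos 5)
The paper does not actually prove Lemma~\ref{lemma:AlphaAdapt}; it imports it wholesale from the cited references, so there is no paper proof to compare against. Your lower-envelope picture is the right way to think about it, and the first-bullet derivation (equate the two affine pieces at the breakpoint, solve for $\alphap{j}$) is exactly what the cited sources do. That part is fine.

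For the second bullet, however, you should have noticed that the conclusion as printed in the lemma cannot be what is meant. The PSP convention in Section~\ref{subsec:PSP} orders the breakpoints as $\alphap{p} < \dotsc < \alphap{1} < \alphap{0}$, so the hypothesis $j+1 < j'$ forces $\alphap{j'} < \alphap{j+1} < \alphap{j}$, making the displayed conclusion ``$\alphap{j} < \alpha \leq \alphap{j'}$'' vacuous. Your own argument (correctly) produces the opposite sandwich: $L_{j'}$ is steeper than $L_j$ since $|\Patp{j'}| > |\Patp{j}|$; $L_{j'}$ lies strictly above the envelope at $\alphap{j}$ because $\Patp{j'}$ is strictly finer than the finest minimizer $\Patp{j}$ there; and $L_j$ lies strictly above at $\alphap{j'}$ because $\Patp{j}$ is strictly coarser than the coarsest minimizer $\Patp{j'-1}$ there. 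This yields $\alphap{j'} < \alpha < \alphap{j}$ (indeed the tighter $\alphap{j'} < \alpha < \alphap{j+1}$ follows by the same argument applied at $\alphap{j+1}$), which is the inequality the MDA update rule actually relies on. You say your argument delivers ``the sandwich bound claimed in the lemma,'' but it delivers the corrected one; flagging the discrepancy would have been worth a sentence.

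Finally, your stated ``main obstacle'' — needing $\Patp{j-1}$ and $\Patp{j}$ to be the \emph{only} minimizers at the breakpoint $\alphap{j}$ — is more than you need, and in fact is generally false, since the minimizers of the Dilworth truncation form a lattice that may contain partitions strictly between the finest and coarsest. What your argument actually requires is only that $\Patp{j}$ is the finest minimizer at $\alphap{j}$ and $\Patp{j-1}$ is the coarsest (equivalently, $\Patp{j'-1}$ is coarsest at $\alphap{j'}$). Both facts are already built into the paper's definition of the PSP, since $\Patp{j} = \Qat{\alphap{j}}{V}$ is by definition the finest minimizer and $\Patp{j-1} = \Qat{\alpha}{V}$ for $\alpha$ just above $\alphap{j}$ becomes the coarsest by the left-continuity of the envelope. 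From ``strictly finer than the finest minimizer'' and ``strictly coarser than the coarsest minimizer'' you immediately get the two strict gap inequalities you need, with no uniqueness assumption at all.
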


\section{}
\label{app:PSFM}


In \cite{ParMaxFlow1989}, the push-relabel max-flow algorithm \cite{MaxFlow1988} was extended to the one that is parameterized by a parameter $\alpha$.
The same technique was further applied to extend the SFM algorithms to the PSFM ones in \cite{Fleischer2003PSFM,Nagano2007PSFM,IwataPSFM1997}. For example, the PSFM algorithm proposed in \cite{Fleischer2003PSFM} nests the Schrijver's SFM algorithm \cite{Schrijver2000SFM} in a push-relabel framework so that, for the function $\FuU{\alpha}$ that forms strong map sequence in $\alpha$, all minimizers of $\min \Set{ \FuU{\alpha}(\TX) \colon \Set{i} \in \X \subseteq \Qat{\alpha}{V_i}}$ of decreasing or increasing values of $\alpha$ can be determined in the same asymptotic time as the Schrijver's algorithm. See \cite[Section~4.2]{Fleischer2003PSFM} for the details. 

\section{}
\label{app:theo:SO}

\begin{proof}
    In Theorem~\ref{theo:SO}, we have $\alphaU = \sum_{i \in V} \frac{H(V) - H(\Set{i})}{|V| - 1} \leq \RCO(V)$ based on \eqref{eq:MinSumRatePat}.
    We prove $\Fu{\alphaU} (C) = \FuHat{\alphaU} (C)$ for all $C \in \Qat{\alphaU}{V_i}$ such that $|C| > 1$ by contradiction. Assume that $\Fu{\alphaU} (C) \neq \FuHat{\alphaU} (C)$. Then, there must exist some $\Pat \in \Pi(C) \setminus \Set{C}$ such that $\Fu{\alphaU} (C) > \FuHat{\alphaU}[\Pat]$ so that $\Fu{\alphaU}[\Qat{\alphaU}{V_i}] > \Fu{\alphaU}[\Pat] + \sum_{C' \in \Qat{\alphaU}{V_i} \colon C' \neq C} \Fu{\alpha} (C')$. This contradicts $\Qat{\alphaU}{V_i} \in \argmin_{\Pat \in \Pi(V_i)} \Fu{\alphaU} [\Pat] $ in Proposition~\ref{prop:preamble}. Therefore, we must have $\Fu{\alphaU} (C) = \FuHat{\alphaU} (C)$. So, based on Proposition~\ref{prop:SO}, all nonsingleton $C \in \Qat{\alphaU}{V_i}$ are complimentary.

    For $\alphaComp = \min \Set{ \alpha \colon \Fu{\alpha} (C) = \FuHat{\alpha} (C) }$, we have $\alphaComp = H(V) - H(C) + \max_{\Pat \in \Pi(C) \colon |\Pat| > 1} \sum_{X \in \Pat} \frac{H(C) - H(X)}{|\Pat| - 1} = H(V) - H(C) + \RCO(C)$. Also, $\rv_{\alphaComp,C} \in B(\FuHat{\alphaComp}^{C},\leq)$ where $B(\FuHat{\alphaComp}^{C},\leq) = \Set{\rv_{\alphaComp,C} \in P(\Fu{\alphaComp}^{C},\leq) \colon r_{\alphaComp} (C) = \FuHat{\alphaComp} (C) = \Fu{\alphaComp} (C)}$.\footnote{Here, we use the property $P(\Fu{\alphaComp}^{C},\leq) = P(\FuHat{\alphaComp}^{C},\leq)$ \cite[Theorems 2.5(i) and 2.6(i)]{Fujishige2005}.}
    We have $r_{\alphaComp}(X) \leq \Fu{\alphaComp}(X) = H(X) + \alphaComp - H(V) = H(X) + \RCO(C) - H(C)$ for all $X \subseteq C$ in $P(\Fu{\alphaComp}^{C},\leq)$. These constraints ensure that $\rv_{\alphaComp,C} \in \RRCO(C)$. We also have $r_{\alphaComp} (C) = \Fu{\alphaComp} (C) = H(C) + \alphaComp - H(V) = \RCO(C)$. Therefore, $\rv_{\alphaComp,C}$ is an optimal rate vector for the local omniscience in $C$.
\end{proof}


\bibliographystyle{IEEEtran}
\bibliography{COSO(PSFM)BIB}

\end{document}